\newcommand{\kTD}[2]{$#1\text{-{\em TD}}#2$}
\newcommand{\kDG}[2]{$#1\text{-}DG#2$}
\newcommand{\kGG}[2]{$#1\text{-}GG#2$}
\newcommand{\kRNG}[2]{$#1\text{-}RNG#2$}
\newcommand{\WS}[1]{\text{WS$(#1)$}}
\newcommand{\CD}[2]{D[#1,#2]}
\newcommand{\CIRC}[2]{C(#1,#2)}
\title{Matching in Gabriel Graphs\thanks{Research supported by NSERC.}}
\author{
Ahmad Biniaz\thanks{School of Computer Science, Carleton University, 
                    Ottawa, Canada.}
\and 
Anil Maheshwari\footnotemark[2]
\and 
Michiel Smid\footnotemark[2]
}
\date{\today}
\newtheorem{lemma}{Lemma}
\newtheorem{corollary}{Corollary}
\newtheorem{theorem}{Theorem}
\newtheorem{observation}{Observation}
\begin{document}

\maketitle

\begin{abstract}
Given a set $P$ of $n$ points in the plane, the order-$k$ Gabriel graph on $P$, denoted by \kGG{k}{}, has an edge between two points $p$ and $q$ if and only if the closed disk with diameter $pq$ contains at most $k$ points of $P$, excluding $p$ and $q$. We study matching problems in \kGG{k}{} graphs. We show that a Euclidean bottleneck perfect matching of $P$ is contained in \kGG{10}{}, but \kGG{8}{} may not have any Euclidean bottleneck perfect matching. In addition we show that \kGG{0}{} has a matching of size at least $\frac{n-1}{4}$ and this bound is tight. We also prove that \kGG{1}{} has a matching of size at least $\frac{2(n-1)}{5}$ and \kGG{2}{} has a perfect matching. Finally we consider the problem of blocking the edges of \kGG{k}{}.
\end{abstract}

\section{Introduction}
Let $P$ be a set of $n$ points in the plane. For any two points $p,q\in P$, let $\CD{p}{q}$ denote the closed disk which has the line segment $\overline{pq}$ as diameter. Let $|pq|$ be the Euclidean distance between $p$ and $q$.
The {\em Gabriel graph} on $P$, denoted by $GG(P)$, is defined to have an edge between two points $p$ and $q$ if $\CD{p}{q}$ is empty of points in $P\setminus\{p,q\}$. Let $C(p,q)$ denote the circle which has $\overline{pq}$ as diameter. Note that if there is a point of $P\setminus\{p,q\}$ on $C(p,q)$, then $(p,q)\notin GG(P)$. That is, $(p,q)$ is an edge of $GG(P)$ if and only if $$|pq|^2<|pr|^2+|rq|^2\quad\quad \forall r\in P,\quad\quad r\neq p,q.$$

Gabriel graphs were introduced by Gabriel and Sokal \cite{Gabriel1969} and can be computed in $O(n\log n)$ time \cite{Matula1980}. Every Gabriel graph has at most $3n-8$ edges, for $n\ge 5$, and this bound is tight \cite{Matula1980}. 

A {\em matching} in a graph $G$ is a set of edges without common vertices. A {\em perfect matching} is a matching which matches all the vertices of $G$. 
In the case that $G$ is an edge-weighted graph, a {\em bottleneck matching} is defined to be a perfect matching in $G$ in which the weight of the maximum-weight edge is minimized. For a perfect matching $M$, we denote the {\em bottleneck} of $M$, i.e., the length of the longest edge in $M$, by $\lambda(M)$. For a point set $P$, a {\em Euclidean bottleneck matching} is a perfect matching which minimizes the length of the longest edge. 

In this paper we consider perfect matching and bottleneck matching admissibility of higher order Gabriel Graphs. The {\em order-$k$ Gabriel graph} on $P$, denoted by \kGG{k}{}, is the geometric graph which has an edge between two points $p$ and $q$ iff $\CD{p}{q}$ contains at most $k$ points of $P\setminus\{p,q\}$. The standard Gabriel graph, $GG(P)$, corresponds to \kGG{0}{}. It is obvious that \kGG{0}{} is plane, but \kGG{k}{} may not be plane for $k\ge 1$. Su and Chang \cite{Su1990} showed that \kGG{k}{} can be constructed in $O(k^2n\log n)$ time and contains $O(k(n-k))$ edges. In \cite{Bose2013}, the authors proved that \kGG{k}{} is $(k+1)$-connected.
\subsection{Previous Work}
For any two points $p$ and $q$ in $P$, the {\em lune} of $p$ and $q$, denoted by $L(p,q)$, is defined as the intersection of the open disks of radius $|pq|$ centred at $p$ and $q$.
The {\em order-$k$ Relative Neighborhood Graph} on $P$, denoted by \kRNG{k}{}, is the geometric graph which has an edge $(p,q)$ iff $L(p,q)$ contains at most $k$ points of $P$.  
The {\em order-$k$ Delaunay Graph} on $P$, denoted by \kDG{k}{}, is the geometric graph which has an edge $(p,q)$ iff there exists a circle through $p$ and $q$ which contains at most $k$ points of $P$ in its interior. 
It is obvious that $$\text{\kRNG{k}{}}\subseteq\text{\kGG{k}{}}\subseteq\text{\kDG{k}{}}.$$

The problem of determining whether a geometric graph has a (bottleneck) perfect matching is quite of interest. Dillencourt showed that the Delaunay triangulation (\kDG{0}{}) admits a perfect matching \cite{Dillencourt1990}. Chang et al. \cite{Chang1992} proved that a Euclidean bottleneck perfect matching of $P$ is contained in \kRNG{16}{}.\footnote{They defined \kRNG{k}{} in such a way that $L(p,q)$ contains at most $k-1$ points of $P$.} This implies that \kGG{16}{} and \kDG{16}{} contain a (bottleneck) perfect matching of $P$. In \cite{Abellanas2009} the authors showed that \kGG{15}{} is Hamiltonian which implies that \kGG{15}{} has a perfect matching. 

Given a geometric graph $G(P)$ on a set $P$ of $n$ points, we say that a set $K$ of points {\em blocks} $G(P)$ if in $G(P\cup K)$ there is no edge connecting two points in $P$, in other words, $P$ is an independent set in $G(P\cup K)$.
Aichholzer et al.~\cite{Aichholzer2013} considered the problem of blocking the Delaunay triangulation (i.e. \kDG{0}{}) for $P$ in general position. They show that $\frac{3n}{2}$ points are sufficient to block DT($P$) and at least $n-1$ points are necessary. To block a Gabriel graph, $n-1$ points are sufficient, and $\frac{3}{4}n-o(n)$ points are sometimes necessary \cite{Aronov2013}.

In a companion paper \cite{Biniaz2014}, we considered the matching and blocking problems in triangular-distance Delaunay (TD-Delaunay) graphs. The {\em order-$k$ TD-Delaunay graph}, denoted by \kTD{k}{}, on a point set $P$ is the graph whose convex distance function is
defined by a fixed-oriented equilateral triangle. Then, $(p,q)$ is an edge in \kTD{k}{} if there exists an equilateral triangle which has $p$ and $q$ on its boundary and contains at most $k$ points of $P\setminus\{p,q\}$. We showed that \kTD{6}{} contains a bottleneck perfect matching and \kTD{5}{} may not have any. As for maximum matching, we proved that \kTD{1}{} has a matching of size at least $\frac{2(n-1)}{5}$ and \kTD{2}{} has a perfect matching (when $n$ is even). We also showed that $\lceil\frac{n-1}{2}\rceil$ points are necessary and $n-1$ points are sufficient to block \kTD{0}{}. In \cite{Babu2013} it is shown that \kTD{0}{} has a matching of size $\lceil\frac{n-1}{3}\rceil$. 

\subsection{Our Results}
In this paper we consider the following three problems: (a) for which values of $k$ does every \kGG{k}{} have a Euclidean bottleneck matching of $P$? (b) for a given value $k$, what is the size of a maximum matching in \kGG{k}{}? (c) how many points are sufficient/necessary to block a \kGG{k}{}? In Section~\ref{preliminaries} we review and prove some graph-theoretic notions. In Section~\ref{bottleneck-section} we consider the problem (a) and prove that a Euclidean bottleneck matching of $P$ is contained in \kGG{10}{}. In addition, we show that for some point sets, \kGG{8}{} does not have any Euclidean bottleneck matching. In Section~\ref{max-matching-section} we consider the problem (b) and give some lower bounds on the size of a maximum matching in \kGG{k}{}. We prove that \kGG{0}{} has a matching of size at least $\frac{n-1}{4}$, and this bound is tight. In addition we prove that \kGG{1}{} has a matching of size at least $\frac{2(n-1)}{5}$ and \kGG{2}{} has a perfect matching. In Section~\ref{blocking-section} we consider the problem (c). We show that at least $\lceil\frac{n-1}{3}\rceil$ points are necessary to block a Gabriel graph and this bound is tight. We also show that at least $\lceil\frac{(k+1)(n-1)}{3}\rceil$ points are necessary and $(k+1)(n-1)$ points are sufficient to block a \kGG{k}{}. The open problems and concluding remarks are presented in Section~\ref{conclusion}.

\section{Preliminaries}
\label{preliminaries}
Let $G$ be an edge-weighted graph with vertex set $V$ and weight function $w:E\rightarrow\mathbb{R^+}$. Let $T$ be a minimum spanning tree of $G$, and let $w(T)$ be the total weight of $T$. 

\begin{lemma}
\label{not-mst-edge}
Let $\delta(e)$ be a cycle in $G$ which contains an edge $e\in T$. Let $\delta'$ be the set of edges in $\delta(e)$ which do not belong to $T$ and let $e'_{max}$ be the largest edge in $\delta'$. Then, $w(e)\le w(e'_{max})$.
\end{lemma}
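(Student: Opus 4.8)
The plan is to argue by contradiction using the standard minimum-spanning-tree exchange argument. Suppose, for the sake of contradiction, that $w(e) > w(e'_{max})$, so that every non-tree edge lying on the cycle $\delta(e)$ is strictly lighter than $e$. I will produce a spanning tree of smaller total weight than $T$, contradicting its minimality.

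First I would delete $e$ from $T$. Since $T$ is a tree, every tree edge is a bridge, so removing $e$ splits $T$ into exactly two subtrees $T_1$ and $T_2$ whose vertex sets $V_1$ and $V_2$ partition $V$. The crucial structural observation is that $e$ is the \emph{unique} edge of $T$ crossing the cut $(V_1,V_2)$: any other tree edge joining $V_1$ and $V_2$ would keep $T\setminus\{e\}$ connected, contradicting that $e$ is a bridge.

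Next I would invoke the fact that any cycle meets any cut in an even number of edges. The cycle $\delta(e)$ crosses the cut $(V_1,V_2)$ through $e$, so it must cross at least once more; let $f$ be such a second crossing edge. By the previous paragraph $f\neq e$ and $f\notin T$, hence $f$ is one of the non-tree edges of $\delta(e)$, i.e.\ $f\in\delta'$, and therefore $w(f)\le w(e'_{max})$.

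Finally comes the exchange. Since $f$ joins $V_1$ and $V_2$, the graph $T'=(T\setminus\{e\})\cup\{f\}$ reconnects the two subtrees and is again a spanning tree of $G$, of weight $w(T')=w(T)-w(e)+w(f)$. Under the contradiction hypothesis we have $w(f)\le w(e'_{max})<w(e)$, so $w(T')<w(T)$, contradicting the minimality of $T$. Hence $w(e)\le w(e'_{max})$. The only delicate point is justifying the existence of the second crossing edge $f$ and that it must be a non-tree edge; once the cut structure of $T\setminus\{e\}$ is pinned down via the parity argument, the concluding swap is routine.
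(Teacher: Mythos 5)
Your proof is correct and follows essentially the same exchange argument as the paper: remove $e$ to split $T$ into two subtrees, pick a non-tree edge of the cycle that crosses the resulting cut, and swap it in to contradict minimality. The only difference is that you explicitly justify, via the cycle--cut parity argument and the uniqueness of $e$ as the tree edge across the cut, the existence of that crossing non-tree edge, a point the paper's proof leaves implicit.
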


\begin{proof}
Let $e=(u,v)$ and let $T_u$ and $T_v$ be the two trees obtained by removing $e$ from $T$. Let $e'=(x,y)$ be an edge in $\delta'$ such that one of $x$ and $y$ belongs to $T_u$ and the other one belongs to $T_v$. By definition of $e'_{max}$, we have $w(e')\le w(e'_{max})$. Let $T'=T_u\cup T_v \cup\{(x,y)\}$. Clearly, $T'$ is a spanning tree of $G$. If $w(e')<w(e)$ then $w(T')<w(T)$; contradicting the minimality of $T$. Thus, $w(e)\le w(e')$, which completes the proof of the lemma. 
\end{proof}
 
For a graph $G=(V,E)$ and $S\subseteq V$, let $G-S$ be the subgraph obtained from $G$ by removing all vertices in $S$, and let $o(G-S)$ be the number of odd components in $G-S$, i.e., connected components with an odd number of vertices. The following theorem by Tutte~\cite{Tutte1947} gives a characterization of the graphs which have perfect matching: 

\begin{theorem}[Tutte~\cite{Tutte1947}] 
\label{Tutte} 
$G$ has a perfect matching if and only if $o(G-S)\le |S|$ for all $S\subseteq V$.
\end{theorem}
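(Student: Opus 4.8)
The plan is to prove the two directions separately, handling necessity by a direct counting argument and sufficiency by an extremal (maximal counterexample) argument. For necessity, suppose $G$ has a perfect matching $M$ and fix any $S\subseteq V$. Each odd component $C$ of $G-S$ has odd order, so $M$ cannot match all of $C$ internally; hence at least one vertex of $C$ is matched by $M$ to a vertex of $S$. Distinct odd components use distinct vertices of $S$, so $o(G-S)\le |S|$.

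For sufficiency, I would argue by contradiction via a maximal counterexample. Note first that adding an edge to $G$ never increases $o(G-S)$ for any $S$ (it can only merge two components, and merging two odd components decreases the odd count by two while all other merges leave it unchanged), so the Tutte condition is preserved under edge addition. Hence if some graph satisfies the condition but has no perfect matching, I may add edges until I reach an edge-maximal such graph $G^*$: it satisfies Tutte's condition, has no perfect matching, yet $G^*+e$ has a perfect matching for every non-edge $e$. Taking $S=\emptyset$ shows $|V|$ is even. Let $U$ be the set of vertices adjacent to every other vertex of $G^*$.

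The crux is the structural claim that $G^*-U$ is a disjoint union of complete graphs. Granting this, I can finish: applying Tutte's condition with $S=U$ gives $o(G^*-U)\le |U|$, so I match one vertex of each odd component of $G^*-U$ to a distinct vertex of $U$. Every component of $G^*-U$ then has an even number of unmatched vertices forming a clique, which I match internally, and the remaining (evenly many) vertices of $U$ form a clique that I also match internally, producing a perfect matching of $G^*$ and the desired contradiction.

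The main obstacle is proving the claim. I would argue by contradiction: if some component of $G^*-U$ is not complete, it contains vertices $a,b,c$ with $ab,bc\in E(G^*)$ but $ac\notin E(G^*)$, and since $b\notin U$ there is a vertex $d$ with $bd\notin E(G^*)$. By maximality, $G^*+ac$ and $G^*+bd$ have perfect matchings $M_1$ and $M_2$, necessarily using $ac$ and $bd$ respectively. I would then examine the symmetric difference $M_1\triangle M_2$, whose components are even alternating cycles. If $ac$ and $bd$ lie in different cycles, switching along the cycle containing $bd$ to its $M_1$-edges and along the cycle containing $ac$ to its $M_2$-edges yields a perfect matching of $G^*$ avoiding both new edges. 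If they lie on a common cycle, I reroute using the path $a\,b\,c$: traversing the cycle from $d$ until first reaching $a$ or $c$ (say $a$, by the $a\leftrightarrow c$ symmetry through $b$), I replace the offending edges by $ab$ together with the appropriate alternating edges, again obtaining a perfect matching inside $G^*$. Either way this contradicts the choice of $G^*$, establishing the claim and the theorem. The delicate bookkeeping in the same-cycle case, namely verifying that the rerouted edge set is genuinely a perfect matching using only edges of $G^*$, is the part demanding the most care.
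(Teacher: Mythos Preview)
The paper does not prove this statement at all: Theorem~\ref{Tutte} is quoted as a classical result with a citation to Tutte~\cite{Tutte1947} and is used as a black box in Section~\ref{max-matching-section}. There is therefore no ``paper's own proof'' to compare against.

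That said, your proposal is a correct and complete outline of the standard Lov\'asz extremal-counterexample proof of Tutte's theorem. The necessity direction is fine. For sufficiency, the reduction to an edge-maximal counterexample $G^*$, the definition of $U$ as the set of universal vertices, the structural claim that every component of $G^*-U$ is a clique, and the endgame matching construction are all standard and correctly stated. Your handling of the claim via $M_1\triangle M_2$ is the usual argument; the two cases (whether $ac$ and $bd$ lie on the same alternating cycle or not) and the rerouting through $b$ in the same-cycle case are exactly what is needed. You are right that the bookkeeping in the same-cycle case is the only genuinely delicate step, but the sketch you give is accurate and can be made rigorous without difficulty.
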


Berge~\cite{Berge1958} extended Tutte’s theorem to a formula (known as the Tutte-Berge formula) for the maximum size of a matching in a graph. In a graph $G$, the {\em deficiency}, $\text{def}_G(S)$, is $o(G-S)-|S|$. Let $\text{def}(G)=\max_{S\subseteq V}{\text{def}_G(S)}$.

\begin{theorem}[Tutte-Berge formula; Berge~\cite{Berge1958}] 
\label{Berge} 
The size of a maximum matching in $G$ is $$\frac{1}{2}(n-\mathrm{def}(G)).$$
\end{theorem}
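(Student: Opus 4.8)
The plan is to derive the formula from Tutte's theorem (Theorem~\ref{Tutte}) by proving matching upper and lower bounds on the maximum matching size. Write $d=\mathrm{def}(G)$ and let $\nu$ denote the size of a maximum matching. For the upper bound, I would fix an arbitrary $S\subseteq V$ and an arbitrary matching $M$, and count the vertices of $G$ left uncovered by $M$. The crucial observation is that $G$ has no edges between distinct connected components of $G-S$, so every matching edge incident to a vertex of an odd component $C$ of $G-S$ either stays inside $C$ or has its other endpoint in $S$. Since $|C|$ is odd while the vertices of $C$ matched within $C$ come in pairs, each odd component contains at least one vertex that is either uncovered by $M$ or matched into $S$. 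As at most $|S|$ vertices can be matched into $S$, at least $o(G-S)-|S|=\mathrm{def}_G(S)$ vertices of $G$ remain uncovered. Taking the maximum over $S$ shows that $M$ leaves at least $d$ vertices uncovered, so $n-2\nu\ge d$, i.e. $\nu\le\frac{1}{2}(n-d)$.

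For the lower bound I would show this value is attained, using a padding construction together with Tutte's theorem. First I would record the parity fact that $\mathrm{def}_G(S)\equiv n\pmod 2$ for every $S$ (because the sizes of the components of $G-S$ sum to $n-|S|$, so $o(G-S)\equiv n-|S|\pmod 2$); hence $n-d$ is even and $\frac{1}{2}(n-d)$ is a genuine integer. Then I would form a new graph $G'$ from $G$ by adding a set $U$ of $d$ new vertices and joining each vertex of $U$ to every other vertex of $G'$ (to all of $V$ and to one another). The graph $G'$ has $n+d$ vertices, an even number, and the aim is to show via Theorem~\ref{Tutte} that $G'$ has a perfect matching.

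To verify Tutte's condition $o(G'-S')\le|S'|$ for all $S'\subseteq V(G')$, I would split into two cases according to whether $U\subseteq S'$. If some vertex $u\in U$ survives, then $u$ is adjacent to all of $G'-S'$, so $G'-S'$ is connected and $o(G'-S')\le 1\le|S'|$ whenever $|S'|\ge 1$, while $|S'|=0$ forces $o(G')=0$ by the even parity of $n+d$. If instead $U\subseteq S'$, write $S=S'\cap V$; then $G'-S'=G-S$, so $o(G'-S')=o(G-S)=\mathrm{def}_G(S)+|S|\le d+|S|=|S'|$, where the inequality is exactly the definition of $d$ as a maximum. Hence $G'$ has a perfect matching $M'$. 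Restricting $M'$ to the edges with both endpoints in $V$ yields a matching of $G$ that fails to cover only those vertices of $V$ matched into $U$, of which there are at most $|U|=d$; thus $\nu\ge\frac{1}{2}(n-d)$. Combined with the upper bound this gives $\nu=\frac{1}{2}(n-\mathrm{def}(G))$.

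I expect the main obstacle to be the second case of the Tutte-condition check, where the definition of $d$ as the maximum deficiency is precisely what makes the inequality go the right way. Everything else is bookkeeping: the padding construction is the device that converts the deficiency statement into the clean perfect-matching form that Theorem~\ref{Tutte} handles directly, and one should note that the argument remains uniform in the degenerate case $d=0$, where $U=\emptyset$ and the second case alone recovers Tutte's condition for $G$ itself.
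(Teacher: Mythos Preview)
The paper does not prove this theorem; it merely states the Tutte--Berge formula as a classical result with a citation to Berge~\cite{Berge1958}, so there is no paper proof to compare against. Your argument is the standard derivation of the formula from Tutte's theorem via the universal-vertex padding construction, and it is correct as written: the counting for the upper bound is clean, the parity observation guarantees $n-d$ is even, the two-case verification of Tutte's condition in $G'$ is accurate (including the $|S'|=0$ and $d=0$ corner cases you flagged), and the restriction of the perfect matching back to $G$ yields the desired lower bound.
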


For an edge-weighted graph $G$ we define the {\em weight sequence} of $G$, \WS{G}, as the sequence containing the weights of the edges of $G$ in non-increasing order. A graph $G_1$ is said to be less than a graph $G_2$ if \WS{G_1} is lexicographically smaller than \WS{G_2}.

\section{Euclidean Bottleneck Matching}
\label{bottleneck-section}
Given a point set $P$, in this section we prove that \kGG{10}{} contains a Euclidean bottleneck matching of $P$. We also present a configuration of a point set $P$ such that \kGG{8}{} does not contain any Euclidean bottleneck matching of $P$. We use a similar argument as in \cite{Abellanas2009, Chang1991}. First consider the following lemma of \cite{Abellanas2009}:
\begin{lemma}[Abellanas et al.~\cite{Abellanas2009}]
\label{cone-lemma}
Let $0 < \theta \le \pi/5$. Let $C(A, \theta, L, R)$ be a cone with apex $A$, bounding rays $L$
and $R$ emanating from $A$ and angle $\theta$ computed clockwise from $L$ to $R$. Given two points $x, y \in C(A, \theta, L, R)$ and a constant $r > 0$. If $|xA| > 2r$ and $|yA|>2r$, then $|xy| < 2r$ or $|xy| < \max\{|xA|− r, |yA|− r\}$.
\end{lemma}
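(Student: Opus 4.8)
The plan is to reduce the statement to a single quadratic inequality via the law of cosines, and then to extract that inequality from the three size hypotheses used \emph{together}. First I would normalize: since the claim is symmetric in $x$ and $y$, assume $|yA| \ge |xA|$, so that $\max\{|xA| - r, |yA| - r\} = |yA| - r$; write $a = |xA|$ and $b = |yA|$, and let $\phi = \angle xAy$ be the angle at the apex, which satisfies $\phi \le \theta \le \pi/5$ because both points lie in the cone. If $|xy| < 2r$ the first alternative of the conclusion already holds, so I would assume $|xy| \ge 2r$ and aim to prove the second alternative $|xy| < b - r$.

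Applying the law of cosines in the triangle $Axy$ gives $|xy|^2 = a^2 + b^2 - 2ab\cos\phi$. Since $b - r > 0$ (as $b > 2r$), the target $|xy| < b - r$ is equivalent, after squaring, to $a^2 - 2ab\cos\phi + 2br - r^2 < 0$. Because $|xy|$ is monotonically increasing in $\phi$, the hardest case is $\phi = \theta$, and I would use $\cos\phi \ge \cos(\pi/5) = \tfrac{1+\sqrt5}{4}$ to replace the angle by its extreme admissible value; note that $2\cos(\pi/5)$ is exactly the golden ratio, which is what makes the estimate close.

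The essential point is that none of the three hypotheses suffices on its own: bounding with $a, b > 2r$ and the angle alone leaves the quadratic possibly nonnegative, so the hypothesis $|xy| \ge 2r$ must be fed back in to force $b$ to be large enough. Concretely, I would observe that the quadratic, viewed as a function of $a$ on its feasible range, attains its maximum at an endpoint $a = 2r$ or $a = b$, reducing the verification to these extremal configurations; in each, the constraint $|xy| \ge 2r$ together with $\phi \le \pi/5$ pins $b$ above the threshold $2r\cdot\tfrac{1+\sqrt5}{2}$, and the inequality then holds strictly. The tight configuration $a = b$, $\phi = \pi/5$, $|xy| = 2r$ simultaneously exhibits sharpness and explains why the opening angle cannot be relaxed much past $\pi/5$.

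The delicate part is precisely this coupling: the reduction to the extremal values of $a$ and the use of $|xy| \ge 2r$ to lower-bound $b$ must be carried out carefully so that the golden-ratio algebra closes, since the bound is essentially tight. The remaining steps are routine case checks at the boundary values.
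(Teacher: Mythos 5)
This lemma is imported verbatim from Abellanas et al.\ and the paper gives no proof of it, so there is nothing internal to compare your argument against; I can only judge it on its own terms. On those terms your route is the standard one and it is sound: with $a=|xA|\le b=|yA|$, $\phi=\angle xAy\le\pi/5$, and $\cos(\pi/5)=\tfrac{1+\sqrt5}{4}$, the quantity $f(a)=a^2+b^2-2ab\cos\phi$ is an upward parabola in $a$, so both $|xy|^2$ and the quadratic $Q(a)=a^2-2ab\cos\phi+2br-r^2$ are maximized over $a\in[2r,b]$ at an endpoint, and the golden-ratio algebra closes at both endpoints: $Q(2r)<0$ and $Q(b)<0$ once $b\ge r(1+\sqrt5)$. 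The one step you should make explicit is how the hypothesis $|xy|\ge 2r$ ``pins'' $b$: that hypothesis refers to the \emph{actual} value of $a$, not to an extremal one, so you cannot simply assert it at $a=2r$ or $a=b$. The fix is the same convexity observation you already use: $4r^2\le f(a)\le\max\{f(2r),f(b)\}$, and each of $f(2r)\ge 4r^2$ and $f(b)\ge 4r^2$ separately forces $b\ge 4r\cos\phi\ge r(1+\sqrt5)$, respectively $b^2\ge 2r^2/(1-\cos\phi)\ge r^2(6+2\sqrt5)=r^2(1+\sqrt5)^2$. With that detour spelled out, and with the strictness of the final inequalities checked (which your boundary case $a=b$, $\phi=\pi/5$, $|xy|=2r$ shows still lands strictly inside the second alternative, since $b-r=r\sqrt5>2r$), the proof is complete; what remains is indeed routine.
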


\begin{figure}[htb]
  \centering
  \includegraphics[width=.6\columnwidth]{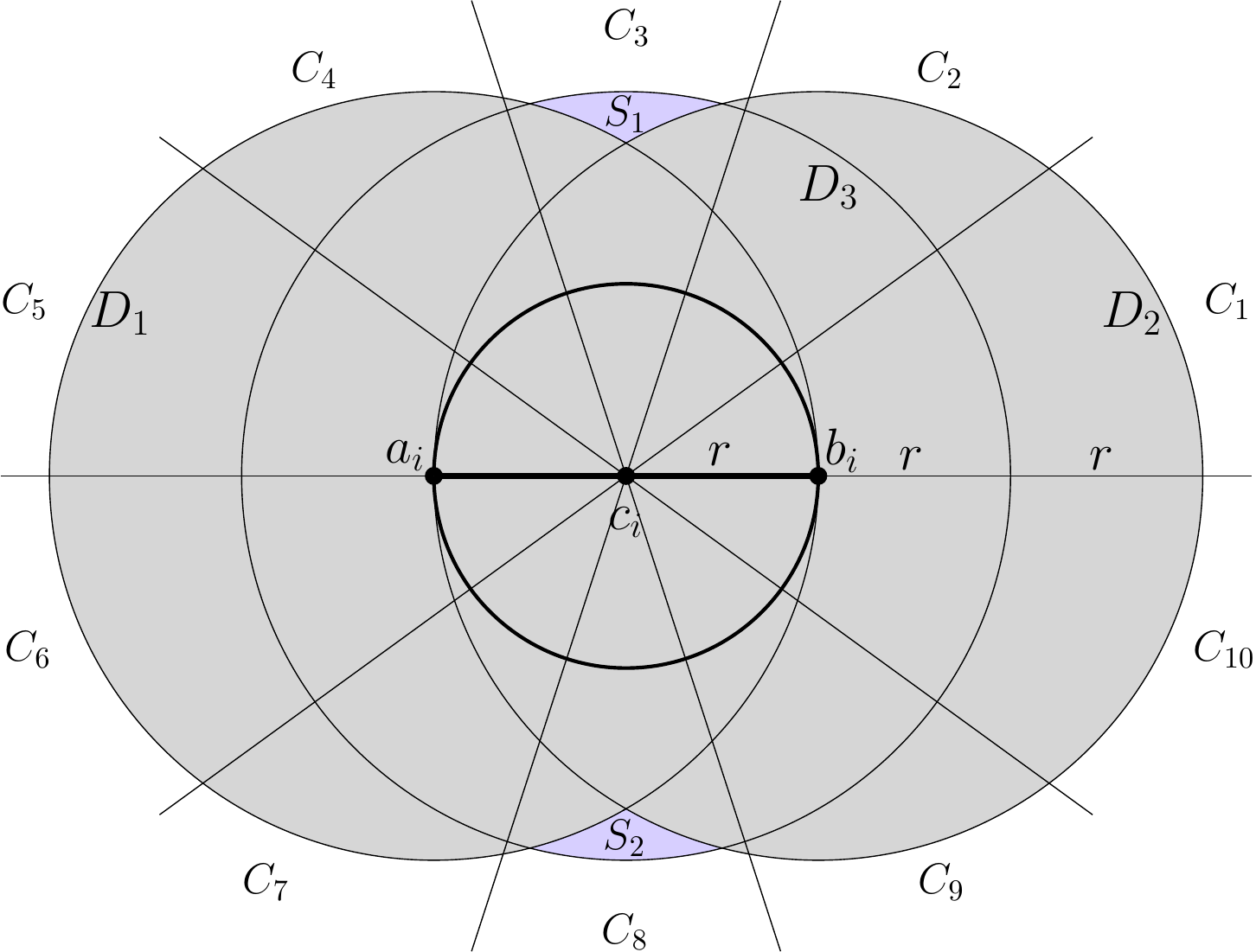}
 \caption{Illustration for Theorem~\ref{10-GG-thr}.}
  \label{10-GG-fig}
\end{figure}

\begin{theorem}
 \label{10-GG-thr}
For every point set $P$, \kGG{10}{} contains a Euclidean bottleneck matching of $P$.
\end{theorem}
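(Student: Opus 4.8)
The plan is to run a lexicographic exchange argument. Among all Euclidean bottleneck matchings of $P$ I would fix one, call it $M^*$, whose weight sequence \WS{M^*} is lexicographically smallest; such a matching exists because $P$ admits only finitely many perfect matchings. It then suffices to prove that \emph{every} edge of $M^*$ is an edge of \kGG{10}{}, since then $M^*$ itself is the desired bottleneck matching inside \kGG{10}{}. So I would assume, for contradiction, that some edge $(p,q)\in M^*$ is \emph{not} an edge of \kGG{10}{}; by definition this means the disk \CD{p}{q} contains at least $11$ points of $P\setminus\{p,q\}$. Note that any two points lying in \CD{p}{q} are at distance at most $|pq|$ (the diameter), and each point $z$ in the disk satisfies $\angle pzq\ge\pi/2$; these facts will let me bound distances to $p$ and $q$ from above by $|pq|$ throughout.

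The core geometric step is to locate two of these $\ge 11$ points whose mutual distance is strictly smaller than $|pq|$, using Lemma~\ref{cone-lemma}. I would set $r=|pq|/4$, so that $2r=|pq|/2$, and first observe that every point of \CD{p}{q} is at distance at least $|pq|/2$ from at least one of $p,q$ (otherwise the triangle inequality $|pq|\le|pz|+|zq|$ is violated). Splitting the eleven points accordingly, at least six of them are at distance $>2r$ from a common endpoint, say $p$. Seen from $p$, the disk subtends an angular range of $\pi$, which I would cover by cones of angle $\pi/5$; doing this at both $p$ and $q$ accounts for the constant $10$ (five cones per apex). By pigeonhole two of the far points, $x$ and $y$, fall in one cone $C(p,\pi/5,L,R)$, and since $|px|,|py|>2r$ Lemma~\ref{cone-lemma} gives $|xy|<2r$ or $|xy|<\max\{|px|-r,|py|-r\}$. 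In either case, using $|px|,|py|\le|pq|$ and $2r=|pq|/2$, I get $|xy|<|pq|$, a chord strictly shorter than the offending edge.

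The last step is the rematching. Let $x'$ and $y'$ be the $M^*$-partners of $x$ and $y$. I would delete $(p,q)$ together with the matching edges incident to the chosen pair and reinsert the short chord $(x,y)$ together with edges joining $p$ and $q$ to the freed vertices, thereby obtaining a new perfect matching along an alternating cycle through $(p,q)$; the disk bound $|px|,|qy|\le|pq|$ and the cone estimate $|xy|<|pq|$ are meant to guarantee that all inserted edges are strictly shorter than $|pq|$, so that \WS{} strictly decreases and $M^*$ stays a bottleneck matching, contradicting its lexicographic minimality. I expect this exchange to be the main obstacle: the clean $4$-cycle swap can fail because the partners $x',y'$ may lie far from both $p$ and $q$, so one must choose the alternating cycle carefully (possibly through several disk points) and verify that no inserted edge exceeds the removed length. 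A secondary but delicate point is the bookkeeping that pins the constant at exactly $10$ rather than something smaller—this is precisely where the slack needed for a valid rematch (as opposed to merely finding a short chord) forces the larger cone count, and it is the part that must be reconciled with the matching lower bound showing \kGG{8}{} can fail.
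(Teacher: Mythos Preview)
Your rematching step is where the argument breaks. Once you locate two points $x,y$ inside $\CD{p}{q}$ with $|xy|<|pq|$ (a fact that is, incidentally, nearly automatic: any two points in a disk of diameter $|pq|$ are at distance at most $|pq|$), you propose to delete $(p,q),(x,x'),(y,y')$ and insert $(x,y)$ together with edges pairing $\{p,q\}$ to $\{x',y'\}$. But you have no control over $|px'|$ or $|qy'|$: the partners $x',y'$ can be arbitrarily far from both $p$ and $q$, so the inserted edges may exceed $|pq|$ and the weight sequence need not decrease. The bound ``$|px|,|qy|\le|pq|$'' you cite is about $x,y$, not about $x',y'$, and is irrelevant to the swap you describe. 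You correctly flag this as the main obstacle, but the plan offers no mechanism to resolve it, and routing through ``several disk points'' does not help: any alternating cycle that frees $(p,q)$ must eventually leave the disk along an edge to some partner, and that edge is exactly what you cannot bound.

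The paper's proof sidesteps this by shifting attention from the disk points to their \emph{partners}. Let $U=\{u_1,\dots,u_w\}$ be the points of $P\setminus\{p,q\}$ inside $\CD{p}{q}$ and let $r_j$ be the $M^*$-mate of $u_j$. Two exchange arguments (each producing a strictly smaller weight sequence) show first that every $r_j$ is at distance at least $|pq|$ from both $p$ and $q$, and second that $|r_jr_k|\ge\max\{|pq|,|r_ju_j|,|r_ku_k|\}$ for all $j\ne k$. These swaps only use edges of the form $(p,r_j)$, $(q,u_j)$, $(p,u_j)$, $(q,u_k)$, $(r_j,r_k)$, whose lengths are all controlled because the $u_j$ sit in $\CD{p}{q}$ and the $r_j$ are precisely what is being constrained. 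The cone decomposition is then taken with apex at the \emph{center} of $\CD{p}{q}$, not at $p$ or $q$, and Lemma~\ref{cone-lemma} (with $r=|pq|/2$) is applied to the partner points $r_j$, which lie outside a disk of radius $|pq|$ around the center; this forces each cone to contain at most one $r_j$, hence $w\le 10$. The constant $10$ comes from this packing of partners around the center, not from any slack needed for a valid rematch.
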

\begin{proof}
Let $\mathcal{M}$ be the set of all perfect matchings through the points of $P$. Define a total order on the elements of $\mathcal{M}$ by their weight sequence. If two elements have exactly the same weight sequence, break ties arbitrarily to get a total order.
Let $M^* = \{(a_1, b_1),\dots, (a_{\frac{n}{2}}, b_{\frac{n}{2}})\}$ be a perfect matching in $\mathcal{M}$ with minimal weight sequence. It is obvious that $M^*$ is a Euclidean bottleneck matching for $P$. We will show that all edges of $M^*$ are in \kGG{10}{}. Consider any edge $e = (a_i, b_i)$ in $M^*$ and its corresponding disk $\CD{a_i}{b_i}$. Suppose that $\CD{a_i}{b_i}$ contains $w$ points of $P\setminus\{a_i,b_i\}$. Let $U = \{u_1, u_2,\dots, u_w\}$ represent the points inside $\CD{a_i}{b_i}$, and $U'=\{r_1, r_2,\dots, r_w\}$ represent the points where $(r_i,u_i)\in M^*$. We will show that $w\le 10$. Let $r=|a_ib_i|/2$ be the radius of $\CD{a_i}{b_i}$. 

{\em Claim 1}: For each $r_j\in U'$, $\min\{|r_ja_i|, |r_jb_i|\} \ge 2r$. To prove this, assume that $|r_ja_i|< 2r$ and let $M$ be the perfect matching obtained from $M^*$ by deleting $\{(a_i,b_i), (r_j,u_j)\}$, and adding $\{(a_i, r_j), (b_i, u_j)\}$. The two new edges are smaller than the old ones. Thus, $\WS{M}<\WS{M^*}$ which contradicts the minimality of $M^*$.

Let $D_1$ and $D_2$ respectively be the open disks with radius $2r$ centered at $a_i$ and $b_i$. By Claim 1, we may assume that no point of $U'$ lies inside $D_1\cup D_2$. In other words all points of $U'$ are contained in $\overline{D_1\cup D_2}$.

{\em Claim 2}: For each pair $r_j$ and $r_k$ of points in $U'$, $|r_jr_k|\ge \allowbreak\max\allowbreak\{|a_ib_i|, \allowbreak |r_ju_j|, \allowbreak |r_ku_k|\}$. To prove this, assume that $|r_jr_k|< \max\{|a_ib_i|, |u_jr_j|,\allowbreak |u_kr_k|\}$. Let $M$ be the perfect matching obtained from $M^*$ by deleting $\{(u_j,r_j),(u_k,r_k),(a_i, b_i)\}$ and adding $\{(a_i, u_j), (b_i, u_k),(r_j, r_k)\}$. Since $\max \{|a_iu_j|,\allowbreak |b_iu_k|,\allowbreak |r_jr_k|\}<\max\{|u_jr_j|,\allowbreak |u_kr_k|,\allowbreak |a_ib_i|\}$, $\WS{M} \allowbreak<\allowbreak \WS{M^*}$ which contradicts the minimality of $M^*$.

Let $c_i$ be the center of $\CD{a_i}{b_i}$. Consider a decomposition of the plane into 10 cones $C_1, \dots, C_{10}$ of angle $\pi/5$ with apex at $c_i$. See Figure~\ref{10-GG-fig}. By contradiction, we will show that each cone $C_i, 1\le i\le 10$, contains at most one point of $U'$. Suppose that a cone $C_i$ where $1\le i\le 10$ contains two points $r_j, r_k\in U'$. It is obvious that 

\begin{equation}\label{equ1}
|r_ju_j|\ge |c_ir_j|-r \quad\quad\text{and} \quad\quad |r_ku_k|\ge |c_ir_k|-r.
\end{equation}

{\em Claim 3}: Each cone $C_i$ where $1\le i\le 10$ and $i\neq 3,8$ contains at most one point of $U'$. Suppose that $C_i$ contain two points $r_j,r_k \in U'$. By Claim 1, all points of $U'$ are contained in $\overline{D_1\cup D_2}$. Consider the disk $D_3$ with radius $2r$ centred at $c_i$, as shown in Figure~\ref{10-GG-fig}. Since $D_3\cap \allowbreak(\overline{D_1\cup D_2})=\allowbreak\emptyset$, $r_j$ and $r_k$ are outside $D_3$, i.e., $|r_jc_i| > 2r$ and $|r_kc_i|>2r$. By Lemma~\ref{cone-lemma}, $|r_jr_k| < 2r$ or $|r_jr_k| \allowbreak<\allowbreak \max\{|r_jc_i| − r, \allowbreak |r_kc_i|− r\}$. By inequality~(\ref{equ1}), $|r_jr_k| < \max\{|a_ib_i|,|r_ju_j|,|r_ku_k|\}$ which contradicts Claim 2.

\begin{figure}[htb]
  \centering
\setlength{\tabcolsep}{0in}
  $\begin{tabular}{cc}
 \multicolumn{1}{m{.5\columnwidth}}{\centering\includegraphics[width=.27\columnwidth]{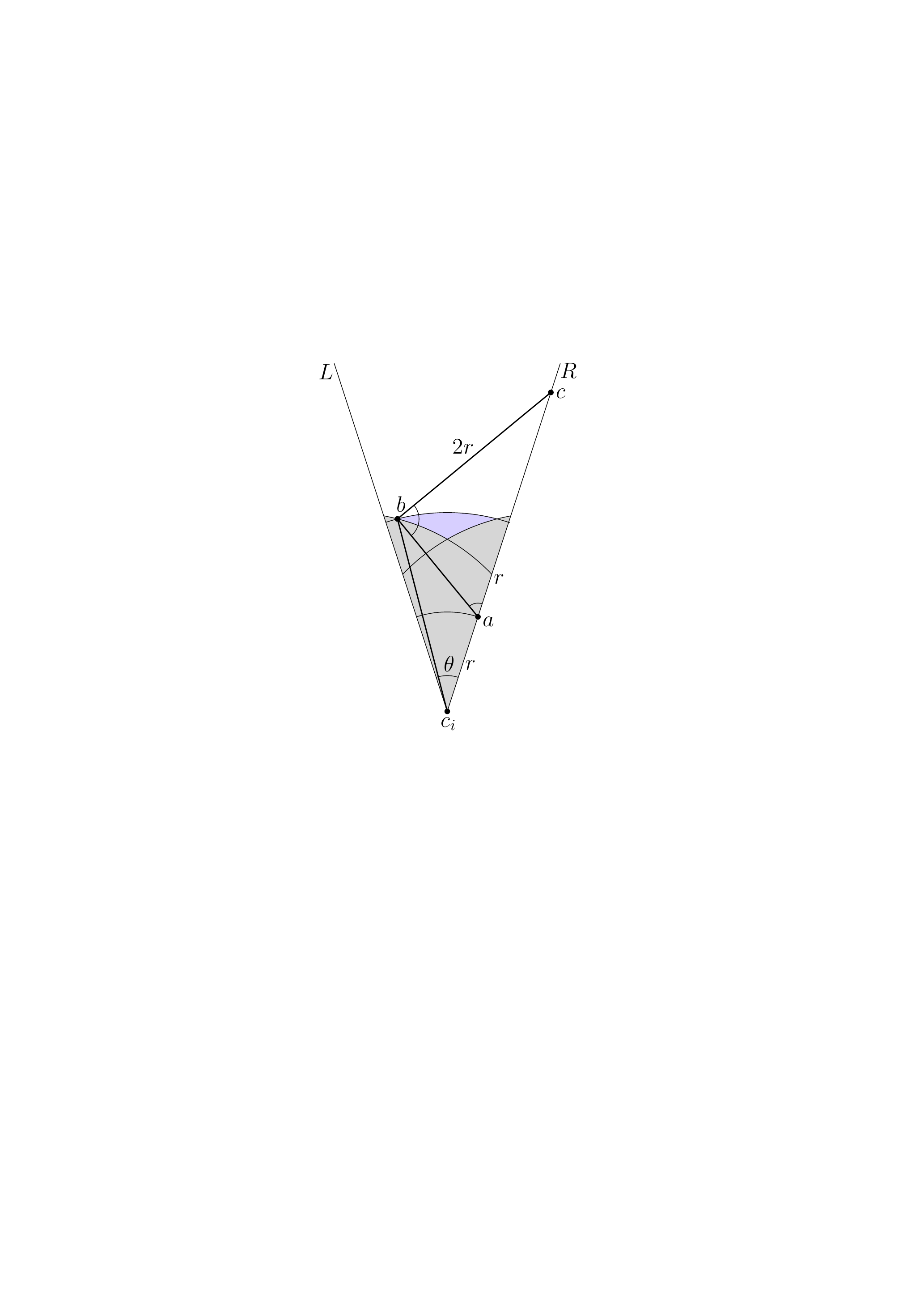}}
&\multicolumn{1}{m{.5\columnwidth}}{\centering\includegraphics[width=.3\columnwidth]{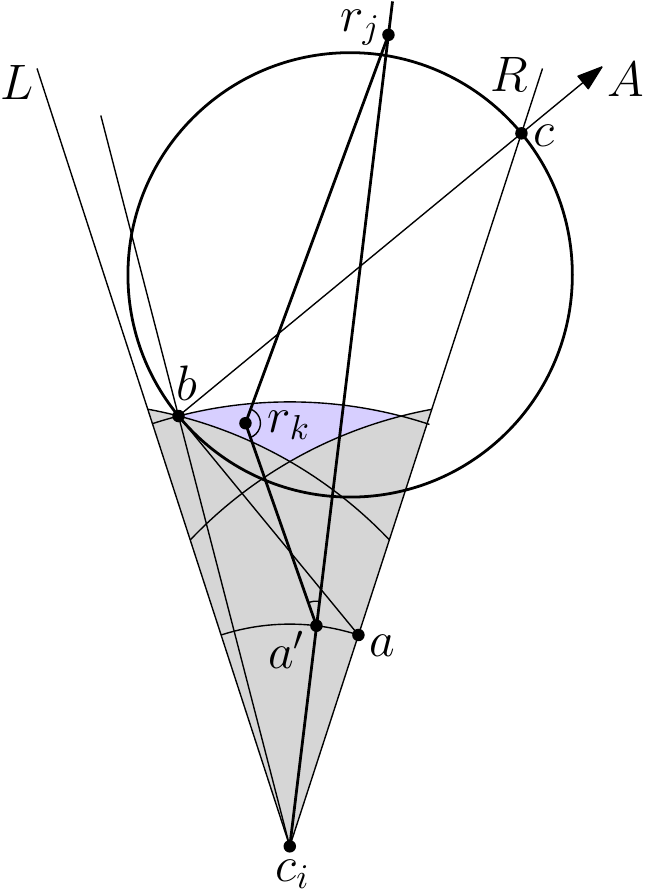}}\\
(a) & (b)
\end{tabular}$
  \caption{(a) The angle $\angle bac$ is smaller than the angle $\angle abc$, and hence (b) $\angle r_ka'r_j <\angle a'r_kr_j$.}
\label{cone-fig}
\end{figure}

{\em Claim 4}: Each of $C_3$ and $C_8$ contains at most one point of $U'$. Let $\{S_1,S_2\}$ be the partition of $D_3\cap (\overline{D_1\cup D_2})$ which lies inside $C_3$ and $C_8$ as shown in Figure~\ref{10-GG-fig}. Because of symmetry, we only prove the claim for $C_3$. Suppose that $C_3$ contains two points $r_j, r_k \in U'$.
For the rest of the proof, refer to Figure~\ref{cone-fig}. 
W.l.o.g. assume that $r_j$ is further from $c_i$ than $r_k$ and $r_k$ is to the left of $r_j$ (i.e., $r_k$ is to the left of the line through
$c_i$ and $r_j$ oriented from $c_i$ to $r_j$). If $r_k\notin S_1$ then $|r_kc_i|>2r$ and $|r_jc_i|>2r$. Then, by Lemma~\ref{cone-lemma} and Claim 2 we have a contradiction. Therefore, assume that $r_k\in S_1$. Let $L$ and $R$ be the two rays defining $C_3$. Let $a$ be the intersection of $R$ and $\CIRC{a_i}{b_i}$. Let $b$ be the intersection of the boundaries of $D_1$ and $D_3$ which is inside $C_3$. Define the point $c$ on $R$ such that $|bc|=2r$ and $c\neq c_i$. See Figure~\ref{cone-fig}(a). The triangle $\bigtriangleup cbc_i$ is isosceles, and hence $\angle bcc_i = \angle bc_ic <\frac{\pi}{5}$. This implies that $\angle cbc_i>\frac{3\pi}{5}$. On the other hand, in triangle $\bigtriangleup abc_i$, $|ab|>|ac_i|$, which implies that $\angle abc_i <\angle ac_ib<\frac{\pi}{5}$. Thus $\angle abc >\frac{2\pi}{5}$. In addition $\angle bac_i>\frac{3\pi}{5}$ and hence $\angle bac < \frac{2\pi}{5}$. Therefore in the triangle $\bigtriangleup abc$ we have $$\angle abc>\frac{2\pi}{5}>\angle bac.$$ Let $C(b,c)$ be the circle with radius $2r$ having $\overline{bc}$ as diameter, and let $A$ be the ray emanating from $b$ which goes through $c$ as shown in Figure~\ref{cone-fig}(b). The intersection of $C_3$ with $\overline{D_1\cup D_2}$ which lies to the right of $A$ is completely inside $C(b,c)$. Thus, if $r_j$ is to the right of $A$, $|r_jr_k|<\allowbreak 2r\allowbreak =\allowbreak |a_ib_i|$, which contradicts Claim 2. Therefore $r_j$ lies to the left of $A$. If $r_j$ is in the interior of $C_3$, rotate $C_3$ counter-clockwise around $c_i$ until $r_j$ lies on $R$. Since $r_k$ is to the left of $r_j$, the point $r_k$ is still in the interior of $C_3$. Let $a'$ be the intersection of the new $R$ with $\CIRC{a_i}{b_i}$. Note that $S_1$ and hence $r_k$ is contained in $\bigtriangleup abc$. In addition $r_j$ and $a'$ are outside $\bigtriangleup abc$ and to the left of the line through $a$ and $c$. Therefore, $\angle a'r_kr_j\ge \angle abc >\angle bac \ge r_ka'r_j$ and hence $$|r_jr_k|<|r_ja'|=|r_jc_i|-r\le |r_ju_j|,$$ which contradicts Claim 2. 

By Claim 3 and Claim 4 each cone $C_i$ where $1\le i \le 10$ contains at most one point of $U'$. Thus, $w\le 10$, and $e=(a_i,b_i)$ is an edge of \kGG{10}{}.
\end{proof}

\begin{figure}[htb]
  \centering
  \includegraphics[width=.8\columnwidth]{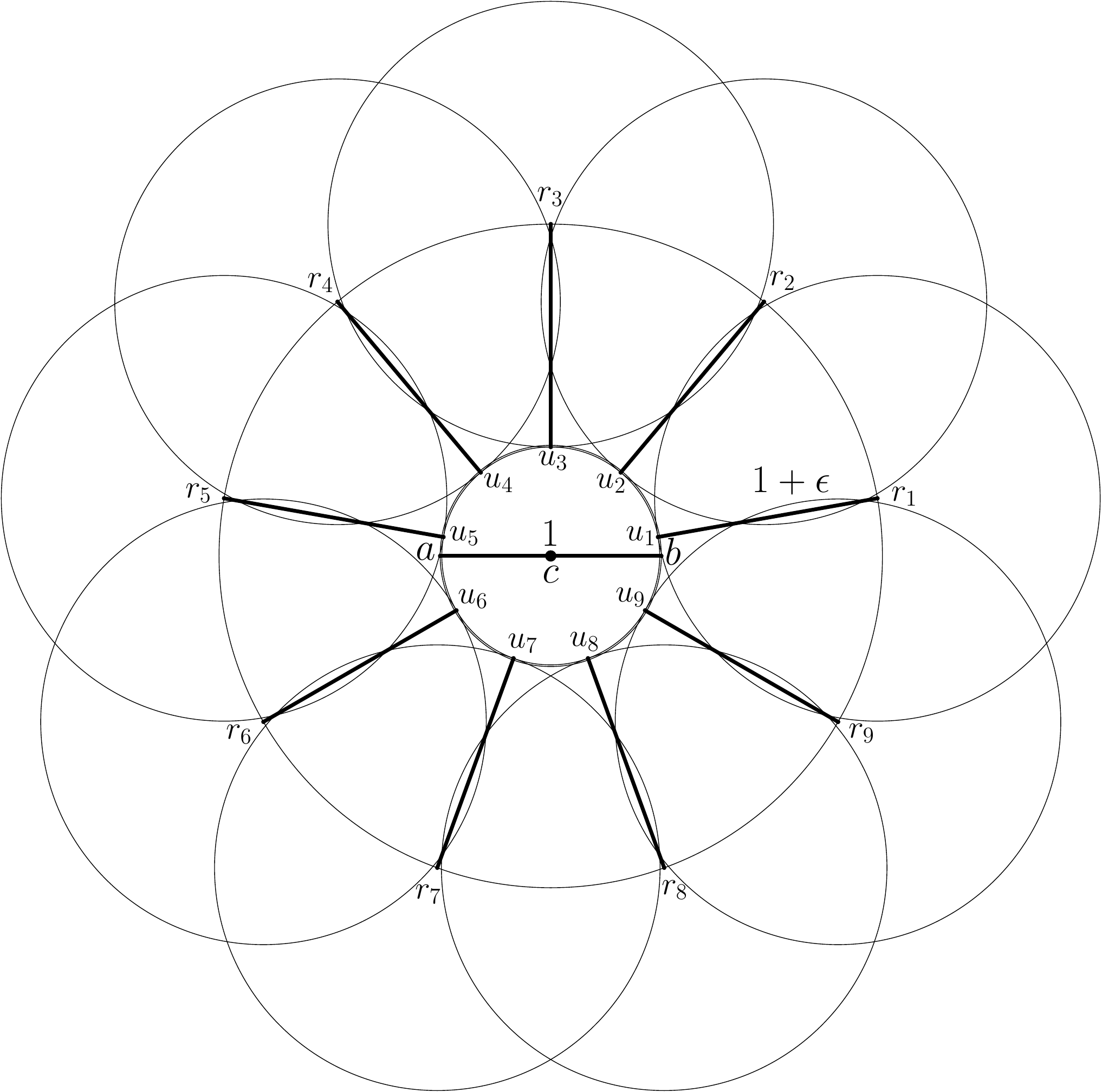}
 \caption{A set of 20 points such that \kGG{8}{} does not contain any Euclidean bottleneck matching.}
  \label{8-GG-fig}
\end{figure}
Now, we will show that for some point sets, \kGG{8}{} does not contain any Euclidean bottleneck matching.
Consider Figure~\ref{8-GG-fig} which shows a configuration of a set $P$ of 20 points. The closed disk $\CD{a}{b}$ is centred at $c$ and has diameter one, i.e., $|ab|=1$. $\CD{a}{b}$ contains 9 points $U=\{u_1, \dots, u_9\}$ which lie on a circle with radius $\frac{1}{2}-\epsilon$ which is centred at $c$. Nine points in $U'=\{r_1,\dots,r_9\}$ are placed on a circle with radius 1.5 which is centred at $c$ in such a way that $|r_ju_j|= 1+\epsilon$, $|r_ja|>1+\epsilon$, $|r_jb|>1+\epsilon$, and $|r_jr_k|>1+\epsilon$ for $1\le j, k\le 9$ and $j\neq k$. Consider a perfect matching $M=\{(a,b)\}\cup \{(r_i, u_i): i=1,\dots, 9\}$ where each point $r_i\in U'$ is matched to its closest point $u_i$. It is obvious that $\lambda(M)=1+\epsilon$, and hence the bottleneck of any bottleneck perfect matching is at most $1+\epsilon$. We will show that any Euclidean bottleneck matching of $P$ contains $(a,b)$. By contradiction, let $M^*$ be a Euclidean bottleneck matching which does not contain $(a,b)$. In $M^*$, $a$ is matched to a point $x\in U\cup U'$. If $x \in U'$, then $|ax|>1+\epsilon$. If $x\in U$, w.l.o.g. assume that $x = u_1$. Thus, in $M^*$ the point $r_1$ is matched to a point $y$ where $y\neq u_1$. Since $u_1$ is the closest point to $r_1$ and $|r_1u_1|=1+\epsilon$, $|r_1y|>1+\epsilon$. In both cases $\lambda(M^*)> 1+\epsilon$, which is a contradiction. Therefore, $M^*$ contains $(a,b)$. Since $\CD{a}{b}$ contains 9 points of $P\setminus\{a,b\}$, $(a,b)\notin\text{\kGG{8}{}}$. Therefore \kGG{8}{} does not contain any Euclidean bottleneck matching of $P$.

\section{Maximum Matching}
\label{max-matching-section}
Let $P$ be a set of $n$ points in the plane. In this section we will prove that \kGG{0}{} has a matching of size at least $\frac{n-1}{4}$; this bound is tight. We also prove that \kGG{1}{} has a matching of size at least $\frac{2(n-1)}{5}$ and \kGG{2}{} has a perfect matching (when $n$ is even).

First we give a lower bound on the number of components that result after removing a set $S$ of vertices from \kGG{k}{}. Then we use Theorem~\ref{Tutte} and Theorem~\ref{Berge}, respectively presented by Tutte~\cite{Tutte1947} and Berge~\cite{Berge1958}, to prove a lower bound on the size of a maximum matching in \kGG{k}{}. 

\begin{figure}[htb]
  \centering
\setlength{\tabcolsep}{0in}
  $\begin{tabular}{cc}
 \multicolumn{1}{m{.5\columnwidth}}{\centering\includegraphics[width=.38\columnwidth]{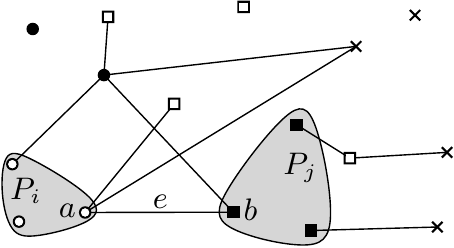}}
&\multicolumn{1}{m{.5\columnwidth}}{\centering\includegraphics[width=.4\columnwidth]{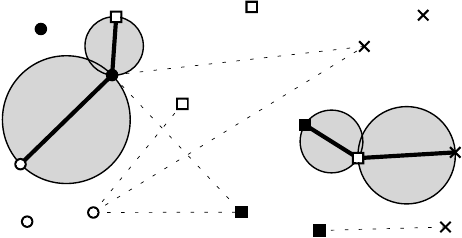}}\\
(a) & (b)
\end{tabular}$
  \caption{The point set $P$ of 16 points is partitioned into open/closed disks, open/closed squares, and crosses. (a) The graph $G(\mathcal{P})$, (b) The set $\mathcal{T}$ of straight-line edges corresponding to $MST(G(\mathcal{P}))$ is in bold, and the set $\mathcal{D}$ of their corresponding disks.}
\label{partition-fig}
\end{figure}

Let $\mathcal{P}=\{P_1, P_2,\dots\}$ be a partition of the points in $P$. For two sets $P_i$ and $P_j$ in $\mathcal{P}$ define the distance $d(P_i,P_j)$ as the smallest Euclidean distance between a point in $P_i$ and a point in $P_j$, i.e., $d(P_i,P_j)=\min\{|ab|:a\in P_i, b\in P_j\}$.
Let $G(\mathcal{P})$ be the complete edge-weighted graph with vertex set $\mathcal{P}$. For each edge $e=(P_i,P_j)$ in $G(\mathcal{P})$, let $w(e)=d(P_i,P_j)$. This edge $e$ is defined by two points $a$ and $b$, where $a\in P_i$ and $b\in P_j$. Therefore, an edge $e\in G(\mathcal{P})$ corresponds to a straight line edge $(a,b)$ in $P$; see Figure~\ref{partition-fig}(a). Let $MST(G(\mathcal{P}))$ be a minimum spanning tree of $G(\mathcal{P})$. It is obvious that each edge $e$ in $MST(G(\mathcal{P}))$ corresponds to a straight line edge $(a,b)$ in $P$. Let $\mathcal{T}$ be the set of all these straight line edges. Let $\mathcal{D}$ be the set of disks which have the edges of $\mathcal{T}$ as diameter, i.e., $\mathcal{D}=\{D[a,b]: (a,b)\in \mathcal{T}\}$. See Figure~\ref{partition-fig}(b).

\begin{observation}
 \label{T-plane}
$\mathcal{T}$ is a subgraph of a minimum spanning tree of $P$, and hence $\mathcal{T}$ is plane.
\end{observation}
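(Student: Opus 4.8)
The plan is to show that all edges of $\mathcal{T}$ lie inside a single Euclidean minimum spanning tree $M$ of $P$; planarity is then immediate, since it is classical that the Euclidean minimum spanning tree of a point set is plane (two crossing edges could always be replaced by a strictly shorter reconnection, contradicting minimality), and a subgraph of a plane graph is plane. So the entire content is the claim that some minimum spanning tree $M$ of $P$ satisfies $\mathcal{T}\subseteq M$.

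First I would attach to each edge of $\mathcal{T}$ a cut of $P$. Fix $(a,b)\in\mathcal{T}$ arising from an edge $(P_i,P_j)$ of $MST(G(\mathcal{P}))$, so that $a\in P_i$, $b\in P_j$, and $|ab|=d(P_i,P_j)=w(P_i,P_j)$. Deleting $(P_i,P_j)$ splits $MST(G(\mathcal{P}))$ into two subtrees $T_i\ni P_i$ and $T_j\ni P_j$; let $A$ be the union of the parts lying in $T_i$ and $B$ the union of the parts lying in $T_j$, so that $(A,B)$ partitions $P$ with $a\in A$ and $b\in B$. Two facts about this cut will drive everything: that $(a,b)$ is a lightest $P$-edge across $(A,B)$, and that $(a,b)$ is the \emph{only} edge of $\mathcal{T}$ crossing $(A,B)$.

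The key step is the first of these. Take any $x\in A$, $y\in B$, with $x\in P_s\subseteq A$ and $y\in P_t\subseteq B$. Then $(P_s,P_t)$ joins $T_i$ to $T_j$, so the path from $P_s$ to $P_t$ in $MST(G(\mathcal{P}))$ must traverse the unique tree edge crossing $(T_i,T_j)$, namely $(P_i,P_j)$; closing that path with $(P_s,P_t)$ yields a cycle whose only non-tree edge is $(P_s,P_t)$, so Lemma~\ref{not-mst-edge} gives $w(P_i,P_j)\le w(P_s,P_t)$. Hence $|ab|=w(P_i,P_j)\le w(P_s,P_t)=d(P_s,P_t)\le|xy|$, which proves minimality across the cut. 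The second fact is immediate from the same observation: an edge of $\mathcal{T}$ crosses $(A,B)$ only if its corresponding $MST(G(\mathcal{P}))$ edge crosses $(T_i,T_j)$, and $(P_i,P_j)$ is the unique such edge, so the corresponding $\mathcal{T}$-edge is $(a,b)$ itself.

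Finally I would run an exchange argument. Starting from any minimum spanning tree $M$ of $P$, while some $e=(a,b)\in\mathcal{T}$ is missing from $M$, add it: $M+e$ has a unique cycle $\gamma$. Since $e$ crosses the cut $(A,B)$ associated with $e$, and a cycle meets a cut an even number of times, $\gamma$ contains another crossing edge $f\in M$, $f\neq e$. Minimality across the cut gives $w(e)\le w(f)$, while Lemma~\ref{not-mst-edge} applied to $\gamma$ inside $M$ (whose only non-tree edge is $e$) gives $w(f)\le w(e)$; thus $w(f)=w(e)$ and $M-f+e$ is again a minimum spanning tree. Crucially, $f\notin\mathcal{T}$ by the second fact, so each swap strictly increases $|M\cap\mathcal{T}|$ and the process terminates with $\mathcal{T}\subseteq M$ (in particular $\mathcal{T}$ is a forest). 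I expect the main obstacle to be precisely this last point: ensuring that the edge leaving the tree is never itself an edge of $\mathcal{T}$, so that the swaps make monotone progress. This is exactly what the uniqueness of the $\mathcal{T}$-edge crossing each cut $(A,B)$ buys us, and it is why I would prove that auxiliary fact alongside the minimality statement rather than as an afterthought.
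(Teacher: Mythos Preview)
The paper states this as an Observation without proof, so there is nothing to compare against at the level of argument. Your proof is correct: for each $(a,b)\in\mathcal{T}$ you exhibit a cut of $P$ across which $(a,b)$ is lightest (via Lemma~\ref{not-mst-edge} applied in $MST(G(\mathcal{P}))$), you observe that $(a,b)$ is the unique $\mathcal{T}$-edge crossing that cut, and you run the standard exchange to absorb $\mathcal{T}$ into a single MST of $P$ without ever ejecting a $\mathcal{T}$-edge. All steps check out, including the termination argument, and planarity then follows from the well-known fact that the Euclidean MST is plane.
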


\begin{lemma}
 \label{D-empty}
A disk $\CD{a}{b}\in\mathcal{D}$ does not contain any point of $P\setminus\{a,b\}$.
\end{lemma}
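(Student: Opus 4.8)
The plan is to argue by contradiction, using an empty-diameter-disk argument driven by the minimum-spanning-tree structure of $MST(G(\mathcal{P}))$. Fix an edge $(a,b)\in\mathcal{T}$ and let $e=(P_i,P_j)$ be the corresponding edge of $MST(G(\mathcal{P}))$, so that $a\in P_i$, $b\in P_j$, and $|ab|=d(P_i,P_j)=w(e)$ (the points $a,b$ are precisely those realizing the minimum inter-part distance). Suppose, for contradiction, that some point $c\in P\setminus\{a,b\}$ lies in $\CD{a}{b}$.

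The first step is a purely geometric observation. Since $c$ lies in the closed disk with diameter $\overline{ab}$, the angle $\angle acb$ is at least $\pi/2$; equivalently $|ab|^2\ge|ac|^2+|bc|^2$ by the law of cosines. As $c\neq a$ and $c\neq b$, both $|ac|$ and $|bc|$ are positive, so this yields the two strict inequalities $|ac|<|ab|$ and $|bc|<|ab|$. Thus $c$ is strictly closer to each of $a$ and $b$ than they are to one another.

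Next I would locate $c$ in the partition, say $c\in P_\ell$, and split into cases. If $\ell=i$, then $c$ and $a$ lie in the same part while $b\in P_j$, so $d(P_i,P_j)\le|bc|<|ab|=d(P_i,P_j)$, a contradiction; the case $\ell=j$ is symmetric, using $|ac|$. The remaining case $\ell\notin\{i,j\}$ is where the tree structure enters. Deleting $e$ from $MST(G(\mathcal{P}))$ splits its vertex set into two components $A\ni P_i$ and $B\ni P_j$, and by the cut property of minimum spanning trees $w(e)$ is no larger than the weight of any edge of $G(\mathcal{P})$ crossing this cut. The part $P_\ell$ lies in exactly one of $A$ or $B$: if $P_\ell\in A$, then $(P_\ell,P_j)$ crosses the cut and $w((P_\ell,P_j))=d(P_\ell,P_j)\le|bc|<|ab|=w(e)$, while if $P_\ell\in B$, then $(P_i,P_\ell)$ crosses the cut and $w((P_i,P_\ell))=d(P_i,P_\ell)\le|ac|<|ab|=w(e)$. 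Either way we exhibit a crossing edge strictly lighter than $e$, contradicting the cut property (equivalently, this is exactly Lemma~\ref{not-mst-edge} applied to the cycle formed by adding the offending crossing edge to the tree).

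The only genuinely delicate point is this last case: it is essential that $(a,b)$ comes from an $MST$ edge and not merely from an arbitrary short inter-part edge, since the empty-disk conclusion can fail for the latter. I expect the cut-property step to be the crux, while the geometric inequality and the two easy same-part cases are routine. Assembling the three cases shows that no such $c$ can exist, so $\CD{a}{b}$ contains no point of $P\setminus\{a,b\}$.
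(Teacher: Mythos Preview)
Your proof is correct and follows essentially the same approach as the paper: contradiction, the geometric fact that any $c\in\CD{a}{b}$ satisfies $|ac|<|ab|$ and $|bc|<|ab|$, and a three-way case split on which part of $\mathcal{P}$ contains $c$. The only cosmetic difference is that in the third case the paper invokes the cycle property (both $(P_i,P_\ell)$ and $(P_\ell,P_j)$ are lighter than $e$, so $e$ is the heaviest edge on the triangle $P_i P_\ell P_j$ and cannot lie in the MST), whereas you phrase it via the cut property; as you note yourself, these are equivalent, and both amount to Lemma~\ref{not-mst-edge}.
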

\begin{proof}
  Let $e=(P_i,P_j)$ be the edge in $MST(G(\mathcal{P}))$ corresponding to $\CD{a}{b}$. Note that $w(e)=|ab|$. By contradiction, suppose that $\CD{a}{b}$ contains a point $c\in P\setminus\{a,b\}$. Three cases arise: (i) $c\in P_i$, (ii) $c\in P_j$, (iii) $c\in P_l$ where $l\neq i$ and $l\neq j$. In case (i) the edge $(c,b)$ between $c\in P_i$ and $b\in P_j$ is smaller than $(a,b)$; contradicting that $w(e)=|ab|$ in $G(\mathcal{P})$.  In case (ii) the edge $(a,c)$ between $a\in P_i$ and $c\in P_j$ is smaller than $(a,b)$; contradicting that $w(e)=|ab|$ in $G(\mathcal{P})$. In case (iii) the edge $(a,c)$ (resp. $(c,b)$) between $P_i$ and $P_l$ (resp. $P_l$ and $P_j$) is smaller than $(a,b)$; contradicting that $e$ is an edge in $MST(G(\mathcal{P}))$. 
\end{proof}

\begin{lemma}
\label{center-in-lemma}
 For each pair $D_i$ and $D_j$ of disks in $\mathcal{D}$, $D_i$ (resp. $D_j$) does not contain the center of $D_j$ (resp $D_i$).
\end{lemma}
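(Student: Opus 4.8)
The plan is to argue by contradiction and to show the statement for one of the two disks, the other following by symmetry. So let $D_i=\CD{a}{b}$ and $D_j=\CD{c}{d}$ with centers $c_i=\tfrac{a+b}{2}$ and $m=\tfrac{c+d}{2}$, and suppose for contradiction that $m\in\CD{a}{b}$. By Observation~\ref{T-plane} the segments $\overline{ab}$ and $\overline{cd}$ are two distinct edges of a common minimum spanning tree $T$ of $P$, and I will exploit this fact through the cut property of $T$. The main tool is Thales' theorem: a point $p$ lies in the closed disk $\CD{a}{b}$ if and only if $\angle apb\ge\pi/2$, equivalently $|pa|^2+|pb|^2\le|ab|^2$. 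Thus the assumption $m\in\CD{a}{b}$ translates into the inequality $|ma|^2+|mb|^2\le|ab|^2$, which I aim to contradict.

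Before committing to the route, it is worth recording what is \emph{not} enough. Lemma~\ref{D-empty} gives that $c,d$ lie outside $D_i$ and $a,b$ lie outside $D_j$, and planarity of $\mathcal{T}$ gives that $\overline{ab}$ and $\overline{cd}$ do not cross; but these facts alone do not force $m$ outside $D_i$, because a chord of a convex disk can have both endpoints outside while its midpoint lies inside. In fact one can exhibit four points with $\overline{ab},\overline{cd}$ non-crossing, all four empty-disk conditions satisfied, yet $m\in\CD{a}{b}$ — such a configuration is excluded only because its two segments cannot both be edges of a single MST. Hence the proof must genuinely use the spanning-tree structure, not just emptiness and planarity.

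For the main step, delete the edge $(a,b)$ from $T$, splitting it into the two components $T_a\ni a$ and $T_b\ni b$. Since $(c,d)$ is another tree edge, it does not cross this cut, so $c$ and $d$ lie in the same component; relabelling $a$ and $b$ if necessary (harmless, as $\CD{a}{b}$ is symmetric in $a,b$) I may assume $c,d\in T_b$, so that $a$ is separated from both. The cut property of the MST then yields $|ac|\ge|ab|$ and $|ad|\ge|ab|$, because $(a,c)$ and $(a,d)$ both cross the cut. Writing $L=|ab|$ and $\ell=|cd|$ and applying Apollonius' median formula in the triangle $acd$ (with $m$ the midpoint of $\overline{cd}$),
$$|ma|^2=\tfrac14\big(2|ac|^2+2|ad|^2-|cd|^2\big)\ge\tfrac14\big(2L^2+2L^2-\ell^2\big)=L^2-\tfrac{\ell^2}{4}.$$
On the other hand, $b\neq c,d$, so Lemma~\ref{D-empty} gives $b\notin D_j$, i.e. $|mb|>\ell/2$ and hence $|mb|^2>\ell^2/4$. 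Adding the two bounds gives $|ma|^2+|mb|^2>L^2=|ab|^2$, contradicting $m\in\CD{a}{b}$. This disposes of the case where $(a,b)$ and $(c,d)$ share no endpoint.

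The remaining, and genuinely more delicate, case is when the two tree edges share a vertex, and I expect this to be the main obstacle. The deletion-and-cut-property scheme still applies verbatim: the shared vertex and the remaining vertex of $\overline{cd}$ land in one component, and the cut property bounds the single true cross-distance from below by $L$, so one of the two median bounds above still holds. The trouble is that the shared vertex, being an endpoint of $\overline{cd}$, lies exactly on $\CIRC{c}{d}$, so its distance to $m$ equals $\ell/2$ \emph{with equality}; the strict slack that drove the previous contradiction is lost, and the estimate degrades to $|ma|^2+|mb|^2\ge L^2$, which only excludes $m$ from the \emph{open} disk. To recover strictness I would invoke general position (distinct interpoint distances, under which $T$ is unique): then the cross edge guaranteed by the cut property is strictly longer than $(a,b)$, making the corresponding median bound strict and restoring $|ma|^2+|mb|^2>L^2$. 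With that degeneracy handled, the argument is complete, the essential ingredients being the MST cut property together with Apollonius' theorem and Thales' characterization of the diametral disk.
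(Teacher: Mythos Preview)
Your argument for the case where $(a,b)$ and $(c,d)$ share no endpoint is correct and takes a genuinely different route from the paper. The paper's proof is geometric: assuming the larger disk $D_j$ contains $c_i$, it names the two intersection points $x,y$ of the boundary circles, pins the four endpoints $a_i,b_i,a_j,b_j$ to specific arcs, and from triangle comparisons extracts $|a_ia_j|<|a_jb_j|$ and $|b_ib_j|<|a_jb_j|$; the $4$-cycle $(a_i,a_j,b_j,b_i)$ then contradicts Lemma~\ref{not-mst-edge}. Your route is algebraic and cleaner: the MST cut property gives $|ac|,|ad|\ge L$, Apollonius turns this into $|ma|^2\ge L^2-\ell^2/4$, Lemma~\ref{D-empty} supplies the strict $|mb|>\ell/2$, and Thales closes. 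This avoids the figure entirely and makes transparent that only the cut property of the tree is needed, not the full cycle lemma.

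The shared-endpoint case, however, is a genuine gap as you present it. With $b=c$ your combined bound collapses to $|ma|^2+|mb|^2=\tfrac12(L^2+|ad|^2)\ge L^2$, with equality exactly when $|ad|=|ab|$; the equilateral triple $a,b,d$ with tree edges $(a,b),(b,d)$ witnesses that the midpoint of $\overline{bd}$ can land precisely on $\CIRC{a}{b}$, so the closed-disk statement is not recovered. Since the paper nowhere assumes distinct interpoint distances, you cannot simply invoke general position to force $|ad|>|ab|$. (The paper's own proof skirts the same degeneracy by writing ``w.l.o.g.\ $|a_ib_i|<|a_jb_j|$'', which likewise leaves the equal-radius situation implicit.) For the downstream use in Theorem~\ref{four-circle-theorem} this boundary case is harmless, but if you want a self-contained argument you should treat it explicitly---for instance, show that in the shared-endpoint configuration the other center can reach only the boundary of $D_i$, and then verify that this weaker conclusion already suffices wherever Lemma~\ref{center-in-lemma} is invoked.
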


\begin{proof}
 Let $(a_i,b_i)$ and $(a_j,b_j)$ respectively be the edges of $\mathcal{T}$ which correspond to $D_i$ and $D_j$. Let $C_i$ and $C_j$ be the circles representing the boundary of $D_i$ and $D_j$. W.l.o.g. assume that $C_j$ is the bigger circle, i.e., $|a_ib_i|<|a_jb_j|$. By contradiction, suppose that $C_j$ contains the center $c_i$ of $C_i$. Let $x$ and $y$ denote the intersections of $C_i$ and $C_j$. Let $x_i$ (resp. $x_j$) be the intersection of $C_i$ (resp. $C_j$) with the line through $y$ and $c_i$ (resp. $c_j$). Similarly, let $y_i$ (resp. $y_j$) be the intersection of $C_i$ (resp. $C_j$) with the line through $x$ and $c_i$ (resp. $c_j$). 

\begin{figure}[htb]
  \centering
  \includegraphics[width=.6\columnwidth]{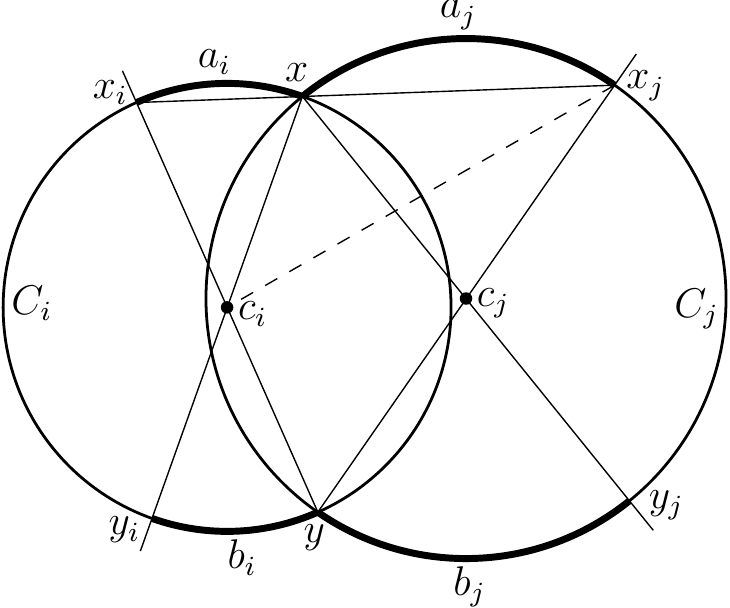}
 \caption{Illustration of Lemma~\ref{center-in-lemma}: $C_i$ and $C_j$ intersect, and $C_j$ contains the center of $C_i$.}
  \label{center-in-fig}
\end{figure}

As illustrated in Figure~\ref{center-in-fig}, the arcs $\widehat{x_ix}$, $\widehat{y_iy}$, $\widehat{x_jx}$, and $\widehat{y_jy}$ are the potential positions for the points $a_i$, $b_i$, $a_j$, and $b_j$, respectively. First we will show that the line segment $x_ix_j$ passes through $x$ and $|a_ia_j|\leq|x_ix_j|$. The angles $\angle x_ixy$ and $\angle x_jx_y$ are right angles, thus the line segment $x_ix_j$ goes through $x$. Since $\widehat{x_ix}<\pi$ (resp. $\widehat{x_jx}<\pi$), for any point $a_i\in \widehat{x_ix}, |a_ix|\leq|x_ix|$ (resp. $a_j\in \widehat{x_jx}, |a_jx|\leq|x_jx|$). Therefore, $$|a_ia_j|\leq|a_ix|+|xa_j|\leq|x_ix|+|xx_j|=|x_ix_j|.$$
Consider triangle $\bigtriangleup x_ix_jy$ which is partitioned by segment $c_ix_j$ into $t_1=\bigtriangleup x_ix_jc_i$ and $t_2=\bigtriangleup c_ix_jy$. It is easy to see that $|x_ic_i|$ in $t_1$ is equal to $|c_iy|$ in $t_2$, and the segment $c_ix_j$ is shared by $t_1$ and $t_2$. Since $c_i$ is inside $C_j$ and $\widehat{yx_j}=\pi$, the angle $\angle yc_ix_j>\frac{\pi}{2}$. Thus, $\angle x_ic_ix_j$ in $t_1$ is smaller than $\frac{\pi}{2}$ (and hence smaller than $\angle yc_ix_j$ in $t_2$). That is,  $|x_ix_j|$ in $t_1$ is smaller than $|x_jy|$ in $t_2$. Therefore,

$$|a_ia_j|\leq|x_ix_j|<|x_jy|=|a_jb_j|.$$

By symmetry $|b_ib_j|<|a_jb_j|$. Therefore $\max\{|a_ia_j|,|b_ib_j|\}<\max\{|a_ib_i|,|a_jb_j|\}$. In addition $\delta=(a_i,a_j,b_j,b_i,a_i)$ is a cycle and at least one of $(a_i,a_j)$ and $(b_i,b_j)$ does not belong to $\mathcal{T}$. This contradicts Lemma~\ref{not-mst-edge} (Note that by Observation~\ref{T-plane}, $\mathcal{T}$ is a subgraph of a minimum spanning tree of $P$).
\end{proof}

Now we show that four disks in $\mathcal{D}$ cannot intersect mutually. In other words, every point in the plane cannot lie in more than three disks in $\mathcal{D}$. In Section~\ref{proof-section} we prove the following theorem, and in Section~\ref{lower-bounds-section} we present the lower bounds on the size of a maximum matching in \kGG{k}{}.

\begin{theorem}
 \label{four-circle-theorem}
For every four disks $D_1,D_2,D_3,D_4\in\mathcal{D}$, $D_1\cap D_2\cap D_3\cap D_4=\emptyset$.
\end{theorem}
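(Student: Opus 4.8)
The plan is to argue by contradiction: suppose some point $z$ lies in all four disks $D_1,D_2,D_3,D_4$, and extract a configuration forbidden by the lemmas already proved. Two elementary facts drive everything. First, since $z$ lies in the closed disk $D_i$ whose diameter is $a_ib_i$, Thales' theorem gives $\angle a_iz b_i\ge \pi/2$, with equality exactly when $z$ lies on the bounding circle $C_i$; moreover the center $c_i$ (the midpoint of $a_ib_i$) lies in the angular wedge $W_i$ at $z$ bounded by the rays $za_i$ and $zb_i$. Second, for any pair $i\neq j$, Lemma~\ref{center-in-lemma} forces $|c_ic_j|>\max\{r_i,r_j\}\ge\max\{|zc_i|,|zc_j|\}$, so in the triangle $\bigtriangleup c_izc_j$ the side $c_ic_j$ is strictly the longest; hence the angle opposite to it, $\angle c_izc_j$, is the strictly largest angle of the triangle and therefore exceeds $\pi/3$.

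On their own, these facts say that the four center-directions $zc_1,\dots,zc_4$ are pairwise more than $\pi/3$ apart while each disk reserves a wedge $W_i$ of angular width at least $\pi/2$. The heart of the argument is the following step: \emph{if $z\in D_i\cap D_j$ then $W_i$ and $W_j$ have disjoint interiors.} Granting this, the four wedges are interior-disjoint, so $\sum_{i=1}^{4}\angle a_izb_i\le 2\pi$; since each term is at least $\pi/2$, all four must equal $\pi/2$ and the wedges must tile the directions around $z$. Equality means $z\in C_i$ for every $i$, a highly degenerate configuration. I would eliminate this residual case by first choosing $z$ in the interior of $D_1\cap\cdots\cap D_4$ whenever that region has nonempty interior, so that every $\angle a_izb_i>\pi/2$ and the strict inequality $\sum_{i=1}^{4}\angle a_izb_i>2\pi$ already contradicts disjointness, and then disposing of the remaining tangential arrangements directly through Lemma~\ref{center-in-lemma}.

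To prove the key step I would examine the cyclic order of the four rays $za_i,zb_i,za_j,zb_j$ around $z$ and show that an overlap of $W_i$ and $W_j$ is incompatible with emptiness. Intuitively, an overlap means one wedge reaches across the center-direction of the other, so its innermost bounding endpoint (say $b_i$) enters the sector of $D_j$; I would then argue that this forces either $b_i\in\mathrm{int}(D_j)$, contradicting Lemma~\ref{D-empty}, or a shorter replacement edge in the $4$-cycle $(a_i,a_j,b_j,b_i,a_i)$, contradicting Lemma~\ref{not-mst-edge}, exactly in the spirit of the proof of Lemma~\ref{center-in-lemma}.

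I expect this key step to be the main obstacle, for two reasons. The wedge $W_i$ need not be symmetric about the center-direction $zc_i$, because the diameter $a_ib_i$ may meet $z$'s line of sight obliquely, so "width at least $\pi/2$'' does not by itself pin down where $W_i$ sits relative to $W_j$; and an endpoint lying angularly inside the other wedge need not lie inside the other disk, since it may be radially beyond the far arc. Both issues mean that ``overlap implies an emptiness violation'' must be established through a careful, case-based use of the empty-disk and minimum-spanning-tree properties, together with bookkeeping for the degenerate cases in which two disks share an endpoint or are internally or externally tangent.
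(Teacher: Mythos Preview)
Your key step---that the wedges $W_i$ and $W_j$ must have disjoint interiors whenever $z\in D_i\cap D_j$---is false, and this is where the real difficulty of the theorem lies. What your cyclic-order analysis \emph{can} establish (and what the paper records as Observation~\ref{inclusion-exclusion}) is only the dichotomy: either $W_i$ and $W_j$ are disjoint, or one is completely contained in the other. Partial overlap does indeed force an endpoint of one edge into the triangle spanned by $z$ and the other edge, and hence into the other disk, contradicting Lemma~\ref{D-empty} (or else forces a crossing of the two edges, contradicting planarity of $\mathcal{T}$). But containment $W_j\subset W_i$ does not: both endpoints $a_j,b_j$ can sit angularly inside $W_i$ yet radially beyond $C_i$, and no $4$-cycle swap produces a shorter edge in general. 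So your ``sum of wedges $\le 2\pi$'' argument collapses at exactly this point.

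The paper's proof accepts that one wedge may be contained in another (a ``blocked'' angle) and builds substantial extra machinery to handle it. If $\alpha_j\subset\alpha_i$, then $c_j$ lies on the far side of $a_ib_i$ from $z$, the segment $a_ib_i$ meets $D_j$, and together with Lemma~\ref{center-in-lemma} this pins $z$ into a small lens-shaped region $trap(D_i)$ near one endpoint of $a_ib_i$, in which $\angle a_izb_i\ge 150^\circ$. One then shows (Lemma~\ref{intersecting-trap}) that traps of different disks are disjoint, so at most one angle can be blocked; the remaining three are free and sum to at least $3\cdot 90^\circ$, while the blocking angle contributes at least $150^\circ$. The final contradiction (Lemma~\ref{intersecting-trap2}) comes from a $6$-cycle swap using all three free disks and the $150^\circ$ bound, not from the simple $4$-cycle you propose. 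In short, the containment case is not a degenerate boundary scenario to be ``disposed of''; it is the heart of the proof, and it requires an argument of a different shape than the one you outline.
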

\subsection{Proof of Theorem~\ref{four-circle-theorem}}
\label{proof-section}
Let $\mathcal{X}=D_1\cap D_2\cap D_3\cap D_4$  and let $x$ be a point in $\mathcal{X}$. Let $(a_i,b_i)$ be the edge in $\mathcal{T}$ which corresponds to $D_i$, let $c_i$ be the center of $D_i$, and let $C_i$ denote the boundary of $D_i$, where $1\le i\le 4$. Denote the angle $\angle a_ixb_i$ by $\alpha_i$, where $1\le i\le 4$. Since $(a_i,b_i)$ is a diameter of $D_i$ and $x$ lies in $D_i$, $\alpha_i \ge\frac{\pi}{2}$. First we prove the following observation.
\begin{observation}
\label{inclusion-exclusion}
 For $1\le i,j\le 4$ where $i\neq j$, the angles $\alpha_i$ and $\alpha_j$ are disjoint or one is completely contained in the other.
\end{observation}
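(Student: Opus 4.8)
The plan is to read each $\alpha_i$ as an angular sector (a wedge) $W_i$ with apex $x$, bounded by the two rays from $x$ through $a_i$ and through $b_i$; since $x\in D_i$ and $a_ib_i$ is a diameter, Thales gives $\alpha_i\ge\pi/2$, and the relevant sector $W_i$ is the one of measure $\alpha_i\le\pi$, which opens toward the segment $a_ib_i$ (it contains the direction from $x$ to the midpoint $c_i$). In these terms, ``disjoint or one contained in the other'' is exactly the statement that $W_i$ and $W_j$ are \emph{laminar}, i.e. the four rays do not interleave in cyclic order about $x$; equivalently, the two directions toward $a_i,b_i$ are consecutive, with the two directions toward $a_j,b_j$ lying on the complementary side. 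So the whole task reduces to proving this separation of directions.

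First I would fix the pair $D_i,D_j$ (the argument is identical for every pair, and $x\in\mathcal{X}\subseteq D_i\cap D_j$ is a common point). The point $x$ lies in the interior of both disks, and by Lemma~\ref{center-in-lemma} neither disk contains the center of the other, hence neither disk contains the other; a shared interior point together with no containment forces the boundary circles $C_i$ and $C_j$ to cross in exactly two points $p,q$. These two points split the directions from $x$ into two complementary arcs, according to whether a ray leaving $x$ exits the lens $D_i\cap D_j$ through $C_i$ or through $C_j$. Now invoke Lemma~\ref{D-empty}: the endpoints $a_i,b_i\in P$ are distinct from $a_j,b_j$, so they lie outside $D_j$, and symmetrically $a_j,b_j$ lie outside $D_i$. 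Following the ray from $x$ toward $a_i$, it must leave $D_j$ strictly before reaching $C_i$ at $a_i$ (it starts inside $D_j$ and ends outside), so its lens-exit is on $C_j$; the same holds for $b_i$, while the rays toward $a_j,b_j$ exit through $C_i$. Thus $a_i,b_i$ point into one of the two complementary direction-arcs and $a_j,b_j$ into the other, so the directions are separated, which is precisely laminarity.

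The hard part will be the bookkeeping that turns ``separated directions'' into ``laminar sectors.'' Two places need care. First, the two-point crossing of $C_i$ and $C_j$ must be established cleanly; this is exactly where Lemma~\ref{center-in-lemma} is indispensable, since it rules out the only alternative (one disk inside the other) to a genuine crossing. Second, because each $W_i$ is the sector of measure at most $\pi$ opening toward its own diameter, I must verify that once the four directions are separated the sectors are truly nested or disjoint and do not wrap around each other; here the planarity of $\mathcal{T}$ (Observation~\ref{T-plane}), namely that the diameters $a_ib_i$ and $a_jb_j$ do not cross, supplies the final consistency check. Finally I would dispose of the degenerate cases separately: if the two tree edges share an endpoint the corresponding rays coincide and laminarity is immediate, and if $x$ lies on one of the circles (so $\alpha_i=\pi/2$) the claim follows by working with closed disks. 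I expect this second step to be the real obstacle, since arguing it without a lengthy case analysis on the positions of $p,q$ relative to $a_i,b_i,a_j,b_j$ is the most delicate point.
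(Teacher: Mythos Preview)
Your proposal is correct in spirit and can be made to work, but it takes a noticeably longer route than the paper's own argument. The paper proceeds by direct contradiction in three lines: if the two cones properly overlap, then (after relabeling) the ray $xb_i$ lies in the cone $\angle a_jxb_j$ and $xb_j$ lies in the cone $\angle a_ixb_i$; now either $b_i\in\triangle xa_jb_j\subset D_j$ (contradicting Lemma~\ref{D-empty}), or $b_j\in\triangle xa_ib_i\subset D_i$ (same contradiction), or else neither triangle contains the opposite endpoint, which forces the segments $a_ib_i$ and $a_jb_j$ to cross (contradicting Observation~\ref{T-plane}). That's the whole proof.

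By contrast, you set up the lens $D_i\cap D_j$, invoke Lemma~\ref{center-in-lemma} to get two transverse intersection points, and use Lemma~\ref{D-empty} to sort the four rays by which boundary arc they exit through. This is a sound way to show the four directions are non-interleaved, but two remarks: first, Lemma~\ref{center-in-lemma} is not actually needed for non-containment (Lemma~\ref{D-empty} alone gives $a_i,b_i\notin D_j$, hence $D_i\not\subset D_j$); second, the step you flag as the ``real obstacle'' --- passing from separated directions to laminar sectors --- does not need planarity at all. Since $x\ne c_i$ (because $c_i\notin D_j$ by Lemma~\ref{center-in-lemma}), each $\alpha_i$ is strictly less than $\pi$, so each $W_i$ is the \emph{shorter} of its two complementary arcs; and a one-line measure count shows that two short arcs whose endpoint pairs are non-interleaved cannot properly overlap. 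So your route closes, but the paper's three-case split is both shorter and uses strictly fewer ingredients.
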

\begin{proof}
The proof is by contradiction. Suppose that $\alpha_i$ and $\alpha_j$ share some part and w.l.o.g. assume that $b_i$ is in the cone which is defined by $\alpha_j$ and $b_j$ is in the cone which is defined by $\alpha_i$. Three cases arise:
\begin{itemize}
 \item $b_i\in\bigtriangleup xa_jb_j$. In this case $b_i$ is inside $D_j$ which contradicts Lemma~\ref{D-empty}.
 \item $b_j\in\bigtriangleup xa_ib_i$. In this case $b_j$ is inside $D_i$ which contradicts Lemma~\ref{D-empty}.
 \item $b_i\notin\bigtriangleup xa_jb_j$ and $b_j\notin\bigtriangleup xa_ib_i$. In this case $(a_i,b_i)$ intersects $(a_j,b_j)$ which contradicts Observation~\ref{T-plane}.
\end{itemize}
\end{proof}

We call $\alpha_i$ a {\em blocked angle} if $\alpha_i$ is contained in an angle $\alpha_j$ where $j\neq i$, otherwise we call $\alpha_i$ a {\em free angle}.

\begin{lemma}
\label{not-all-free-angles}
At least one $\alpha_i$, where $1\le i \le 4$, is blocked.
\end{lemma}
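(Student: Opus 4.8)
The plan is to argue by contradiction: suppose every $\alpha_i$ is free, i.e.\ no $\alpha_i$ is contained in another. Two free angles cannot overlap, since by Observation~\ref{inclusion-exclusion} any two of the $\alpha_i$ are either disjoint or nested, and nesting is excluded for free angles; hence $\alpha_1,\dots,\alpha_4$ are pairwise disjoint as angular regions at $x$. These pairwise-disjoint cones all sit inside the full angle $2\pi$ around $x$, so their measures sum to at most $2\pi$. On the other hand each $\alpha_i\ge\pi/2$ because $(a_i,b_i)$ is a diameter of $D_i$ and $x\in D_i$. Therefore $\sum_i\alpha_i\ge 2\pi$, which forces $\sum_i\alpha_i=2\pi$ and $\alpha_i=\pi/2$ for every $i$; the four right-angle cones then tile the plane around $x$ with pairwise shared bounding rays.

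First I would record what equality costs: $\alpha_i=\pi/2$ means $x$ lies on the circle $C_i$ for each $i$, so all four circles pass through $x$, and the two rays bounding the cone $\alpha_i$ are exactly $\vec{xa_i}$ and $\vec{xb_i}$, which are perpendicular. Ordering the four cones cyclically around $x$, consecutive cones share a bounding ray, giving four shared rays $\rho_1,\dots,\rho_4$; on each $\rho_k$ lies one endpoint of each of the two adjacent edges.

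The crux, and the step I expect to be the main obstacle, is exactly this degenerate tiling case, which I would resolve as follows. Fix a shared ray $\rho$ carrying an endpoint $e_i$ of edge $i$ and an endpoint $e_j$ of edge $j$ (the two cones meeting along $\rho$). Both $x$ and $e_i$ lie on $C_i$, so the segment from $x$ to $e_i$ is a chord of $D_i$ and every point strictly between them lies in the interior of $D_i$; symmetrically for $D_j$ with $e_j$. If $|xe_i|\ne|xe_j|$, then the nearer endpoint lies strictly inside the other disk, and it is a point of $P$ distinct from that disk's two diameter endpoints (it sits on $\rho$ closer to $x$ than the far disk's endpoint $e_j$, while the far disk's other endpoint lies on the perpendicular bounding ray $\ne\rho$); this contradicts Lemma~\ref{D-empty}. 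Hence $e_i=e_j$ on every shared ray. Writing $v_k$ for the common endpoint on $\rho_k$, the four edges become $(v_1,v_2),(v_2,v_3),(v_3,v_4),(v_4,v_1)$, a cycle of four distinct edges of $\mathcal{T}$ (distinct because $D_1,\dots,D_4$ are). This contradicts Observation~\ref{T-plane}, by which $\mathcal{T}$ is a subgraph of a spanning tree and hence acyclic. The contradiction shows that not all $\alpha_i$ can be free, proving the lemma.
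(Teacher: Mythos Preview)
Your proof is correct and follows essentially the same approach as the paper: assume all four angles are free, use Observation~\ref{inclusion-exclusion} to get pairwise disjointness, compare $\sum\alpha_i\ge 2\pi$ with the total angle at $x$, and in the equality case obtain a $4$-cycle in $\mathcal{T}$, contradicting that $\mathcal{T}$ is contained in a spanning tree. The only difference is that you fully justify the equality case (shared rays, coinciding endpoints via Lemma~\ref{D-empty}) where the paper simply asserts that the four edges form a cycle; your extra care here is warranted and does not deviate from the intended argument.
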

\begin{proof}
Suppose that all angles $\alpha_i$, where $1\le i \le 4$, are free. This implies that the $\alpha_i$s are pairwise disjoint and $\alpha=\sum_{i=1}^{4}{\alpha_i} \ge 2\pi$. If $\alpha > 2\pi$, we obtain a contradiction to the fact that the sum of the disjoint angles around $x$ is at most $2\pi$. If $\alpha = 2\pi$, then the four edges $(a_i,b_i)$ where $1,\le i\le 4$, form a cycle which contradicts the fact that $\mathcal{T}$ is a subgraph of a minimum spanning tree of $P$.
\end{proof}

\begin{figure}[htb]
  \centering
\setlength{\tabcolsep}{0in}
  $\begin{tabular}{cc}
 \multicolumn{1}{m{.7\columnwidth}}{\centering\includegraphics[width=.40\columnwidth]{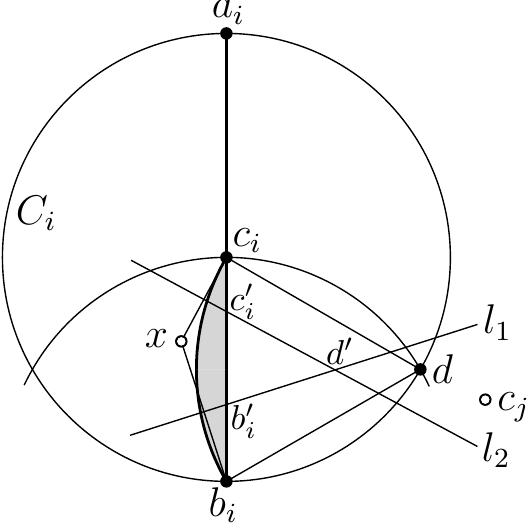}}
&\multicolumn{1}{m{.3\columnwidth}}{\centering\includegraphics[width=.06\columnwidth]{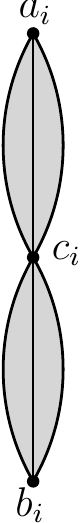}}\\
(a) & (b)
\end{tabular}$
  \caption{(a) The point $x$ should be inside the arc $\widehat{c_ib_i}$. (b) The $trap(a_i,b_i)$ which consists of two almond-shaped regions known as $trap(a_i)$ and $trap(b_i)$.}
\label{trap-fig}
\end{figure}

By Lemma~\ref{not-all-free-angles} at least one of the angles is blocked. Hereafter, assume that $\alpha_j$ is blocked by $\alpha_i$ where $1\le i,j\le 4$ and $i\neq j$. W.l.o.g. assume that $a_ib_i$ is a vertical line segment and the point $x$ (which belongs to $\mathcal{X}$) is to the left of $a_ib_i$. Thus, $a_jb_j$ and $c_j$ are to the right of $a_ib_i$. This implies that $a_ib_i\cap D_j\neq \emptyset$. See Figure~\ref{trap-fig}(a). By Lemma~\ref{center-in-lemma}, $c_i$ cannot be inside $D_j$, thus either $a_ic_i\cap D_j\neq \emptyset$ or $c_ib_i\cap D_j \neq\emptyset$, but not both. W.l.o.g. assume that $c_ib_i\cap D_j\neq \emptyset$. Let $C'$ be the circle with radius $|c_ib_i|$ which is centred at $b_i$. Let $d$ denote the intersection of $C'$ with $C_i$ which is to the right of $c_ib_i$. Consider the circle $C''$ with radius $|xb_i|$ centred at $d$. Let $\widehat{c_ib_i}$ be the closed arc of $C''$ to the left of $c_ib_i$ as shown in Figure~\ref{trap-fig}(a).

We show that $x$ cannot be outside $\widehat{c_ib_i}$. By contradiction suppose that $x$ is outside $\widehat{c_ib_i}$ (and to the left of $c_ib_i$). Let $l_1$ and $l_2$ respectively be the perpendicular bisectors of $xb_i$ and $xc_i$. Let $b'_i$ and $c'_i$ respectively be the intersection of $l_1$ and $l_2$ with $c_ib_i$ and let $d'$ be the intersection point of $l_1$ and $l_2$. Since $x$ is outside $\widehat{c_ib_i}$, the intersection point $d'$ is to the left of (the vertical line through) $d$ and inside triangle $\bigtriangleup b_ic_id$. If $c_j$ is below $l_1$ then $|c_jb_i|<|c_jx|$ and $D_j$ contains $b_i$ which contradicts Lemma~\ref{center-in-lemma}. If $c_j$ is above $l_2$ then $|c_jb_i|<|c_jx|$ and $D_j$ contains $c_i$ which contradicts Lemma~\ref{center-in-lemma}. Thus, $c_j$ is above $l_1$ and below $l_2$, and (by the initial assumption) to the right of $c_ib_i$. That is, $c_j$ is in triangle $\bigtriangleup b'_ic'_id'$. Since $\bigtriangleup b'_ic'_id'\subseteq \bigtriangleup b_ic_id\subseteq D_i$, $c_j$ lies inside $D_i$ which contradicts Lemma~\ref{center-in-lemma}. Therefore, $x$ is contained in $\widehat{c_ib_i}$. 

By symmetry $D_j$ can intersect $a_ic_i$ and/or $c_j$ can be to the left of $a_ib_i$ as well. Therefore, if $\alpha_i$ blocks $\alpha_j$, the point $x$ can be in $\widehat{c_ib_i}$ or any of the symmetric arcs. For an edge $a_ib_i$ we denote the union of these arcs by $trap(a_i,b_i)$ which is shown in Figure~\ref{trap-fig}(b). For each disk $D_i$, let $trap(D_i)= trap(a_i,b_i)$ where $(a_i, b_i)$ is the edge in $\mathcal{T}$ corresponding to $D_i$. Therefore $x$ is contained in $trap(D_i)$ which implies that $$\mathcal{X}\subseteq trap(D_i).$$ Note that $trap(D_i)$ consists of two almond-shaped symmetric regions; for simplicity we call them $trap(a_i)$ and $trap(b_i)$, i.e., $trap(D_i)=trap(a_i)\cup trap(b_i)$.

\begin{lemma}
\label{angle-in-trap}
 For any point $x\in trap(a_i,b_i)$, $\angle a_ixb_i \ge 150^\circ$.
\end{lemma}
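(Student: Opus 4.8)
The plan is to show that the entire trap lies inside the lens-shaped region of points that see the diameter $a_ib_i$ under an angle of at least $150^\circ$, and to identify that lens via the inscribed-angle theorem. Set up coordinates with $c_i$ at the origin, $a_i=(0,r)$ and $b_i=(0,-r)$, where $r=|c_ib_i|$ is the radius of $D_i$; recall from the construction preceding the lemma that $d$ is the apex of the equilateral triangle $\triangle c_ib_id$ lying to the right of the line $a_ib_i$ (it is the intersection of $C'$, of radius $|c_ib_i|$ about $b_i$, with $C_i$), so the outer boundary of $trap(b_i)$ is the arc $\widehat{c_ib_i}$ of the circle $C_d$ centred at $d$ with radius $r$, i.e.\ the circle through $c_i$ and $b_i$. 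Since $\angle a_ixb_i$ is invariant under reflection in the line $a_ib_i$ and under reflection in the perpendicular bisector of $a_ib_i$ — the two symmetries that generate the four arcs making up $trap(a_i,b_i)$ — it suffices to prove the bound for a point $x$ in the single almond bounded by the segment $c_ib_i$ and the arc $\widehat{c_ib_i}$.

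First I would pin down the locus $\{x:\angle a_ixb_i=150^\circ\}$ on the left of $a_ib_i$. By the inscribed-angle theorem (equivalently the law of sines $|a_ib_i|=2R\sin 150^\circ$), this locus is the minor arc through $a_i$ and $b_i$ of a circle $\Gamma$ of radius $R=2r$, whose centre $O$ lies on the perpendicular bisector of $a_ib_i$ at distance $\sqrt{R^2-r^2}=\sqrt3\,r$ from $c_i$, on the side opposite to $x$; that is, $O=(\sqrt3\,r,0)$. Points strictly between this arc and the chord $a_ib_i$ see the diameter under an angle exceeding $150^\circ$, while points farther out see it under a smaller angle. Hence the region to the left of $a_ib_i$ on which $\angle a_ixb_i\ge 150^\circ$ is exactly the closed lens $\mathcal{L}=(\text{left half-plane})\cap\overline{D_\Gamma}$, where $D_\Gamma$ is the open disk bounded by $\Gamma$.

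The crux of the argument is the observation that $C_d$ is internally tangent to $\Gamma$ at $b_i$. Indeed, $d=(\tfrac{\sqrt3}{2}r,-\tfrac r2)$ is precisely the midpoint of the segment $Ob_i$, so $|Od|=r=2r-r=R-r$ equals the difference of the two radii; two circles whose centre-distance equals the difference of their radii are internally tangent, the smaller lying inside the larger, and here the point of tangency is $b_i$. Consequently $C_d\setminus\{b_i\}$ lies strictly inside $\overline{D_\Gamma}$, and in particular the arc $\widehat{c_ib_i}$ lies in $\mathcal{L}$. Since $trap(b_i)$ is the region caught between the chord $c_ib_i$ and this arc, it follows that $trap(b_i)\subseteq\mathcal{L}$, so every $x\in trap(b_i)$ satisfies $\angle a_ixb_i\ge150^\circ$; the bound for the remaining almond follows by the symmetries noted above. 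The bound is tight: as $x\to b_i$ along $\widehat{c_ib_i}$ the angle tends to exactly $150^\circ$, reflecting the tangency at $b_i$.

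The step I expect to require the most care is the reduction to the outer arc together with the tangency computation: one must verify that the $150^\circ$ level set is the radius-$2r$ circle centred at $(\sqrt3\,r,0)$ and that the equilateral apex $d$ is exactly the midpoint of $Ob_i$, since it is this coincidence — dictated by the $60^\circ$ angle of the equilateral triangle matching the $180^\circ-150^\circ=30^\circ$ deficit — that forces $C_d$ to touch $\Gamma$ rather than cross it. A purely trigonometric alternative (parametrising $x$ on $C_d$ and minimising $\angle a_ixb_i$) also works but is messier; the tangency viewpoint is what makes the value $150^\circ$ appear without computation.
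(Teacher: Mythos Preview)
Your argument is correct, but the paper takes a shorter path. Instead of introducing the auxiliary $150^\circ$-locus circle $\Gamma$ for the chord $a_ib_i$ and proving internal tangency of $C_d$ to $\Gamma$, the paper applies the inscribed-angle theorem directly to $C_d$ itself: since $c_i$ and $b_i$ lie on $C_d$ and $\angle c_i d b_i = 60^\circ$ (the triangle $c_ib_id$ is equilateral), every point $x$ on the minor arc $\widehat{c_ib_i}$ satisfies $\angle c_i x b_i = 150^\circ$, and every point in the almond between the arc and the chord $c_ib_i$ satisfies $\angle c_i x b_i \ge 150^\circ$; then one simply notes $\angle a_i x b_i \ge \angle c_i x b_i$ because $c_i$ lies on the segment $a_ib_i$. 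This avoids coordinates and the tangency computation entirely. Your route, on the other hand, gives a transparent reason why the bound is exactly $150^\circ$ and why it is only attained in the limit $x\to b_i$: the coincidence that $d$ is the midpoint of $Ob_i$ (equivalently, the $60^\circ$ angle of the equilateral triangle matches the $30^\circ$ defect of $150^\circ$) forces $C_d$ to touch $\Gamma$ rather than cross it.
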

\begin{proof}
 See Figure~\ref{trap-fig}(a). The angle $\angle b_idc_i = 60^\circ$, which implies that $\widehat{c_ib_i}= 60^\circ$. Thus, for any point $x'$ on the arc $\widehat{c_ib_i}$, $\angle x'c_ib_i + \angle x'b_ic_i = 30^\circ$, and hence for any point $x$ in $\widehat{c_ib_i}$, $\angle xc_ib_i + \angle xb_ic_i \le 30^\circ$. This implies that in $\bigtriangleup xb_ic_i$, $\angle b_ixc_i \ge 150^\circ$. On the other hand $\angle b_ixc_i \le \angle b_ixa_i$, which proves the lemma.
\end{proof}

\begin{figure}[htb]
  \centering
\setlength{\tabcolsep}{0in}
  $\begin{tabular}{ccc}
 \multicolumn{1}{m{.33\columnwidth}}{\centering\includegraphics[width=.15\columnwidth]{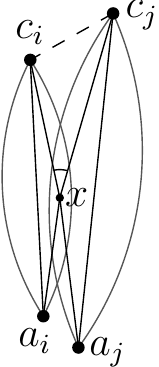}}
&\multicolumn{1}{m{.33\columnwidth}}{\centering\includegraphics[width=.15\columnwidth]{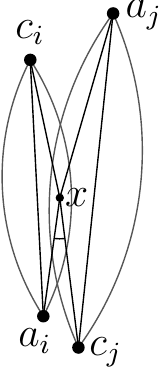}}
&\multicolumn{1}{m{.33\columnwidth}}{\centering\includegraphics[width=.14\columnwidth]{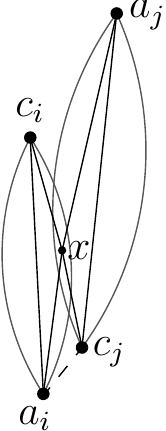}}\\
(a) & (b)& (c)
\end{tabular}$
  \caption{Illustration of Lemma~\ref{intersecting-trap}.}
\label{trap-intersection-fig}
\end{figure}

\begin{lemma}
\label{intersecting-trap}
For any two disks $D_i$ and $D_j$ in $\mathcal{D}$, $trap(D_i)\cap trap(D_j)=\emptyset$.
\end{lemma}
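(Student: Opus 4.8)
The plan is to argue by contradiction: suppose some point $x$ lies in $trap(D_i)\cap trap(D_j)$. Since $trap(D_i)=trap(a_i)\cup trap(b_i)$ and likewise for $j$, after possibly relabeling the endpoints of each edge I may assume $x\in trap(b_i)\cap trap(b_j)$. By the construction underlying Lemma~\ref{angle-in-trap} (applied to each almond-shaped half), this yields the two wide-angle conditions $\angle c_ixb_i\ge 150^\circ$ and $\angle c_jxb_j\ge 150^\circ$; in particular $\angle a_ixb_i\ge 150^\circ$ and $\angle a_jxb_j\ge 150^\circ$, so $x$ lies inside both $D_i$ and $D_j$. I will also use throughout that the segments $(a_i,b_i)$ and $(a_j,b_j)$ do not cross (Observation~\ref{T-plane}), that neither disk contains a point of $P$ other than its own two endpoints (Lemma~\ref{D-empty}), and that neither disk contains the centre of the other (Lemma~\ref{center-in-lemma}).

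The first real step is to reduce to a bounded number of configurations according to the cyclic order of the four rays $xc_i,xb_i,xc_j,xb_j$ around $x$. Because each of the two angles is at least $150^\circ$, the ray $xc_i$ is within $30^\circ$ of being antipodal to $xb_i$, and similarly for the pair $(c_j,b_j)$; hence each pair occupies a thin double-wedge about an axis through $x$, and the configuration is essentially determined by the angle between the two axes and by which ends line up. Up to the symmetry $i\leftrightarrow j$ and reflection this leaves the three cases drawn in Figure~\ref{trap-intersection-fig}. I will assume without loss of generality that $D_j$ is the larger disk, i.e.\ $|a_ib_i|\le|a_jb_j|$.

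In each case I would extract a contradiction from one of the three forbidden events. When the two axes are ``anti-aligned'' (the end $b_i$ falls on the same side of $x$ as the centre $c_j$), the wide angles force $b_i$ into the interior of the larger disk $D_j$, violating Lemma~\ref{D-empty} (or, if the two edges happen to share that endpoint, forcing $(a_i,b_i)$ and $(a_j,b_j)$ to cross, violating Observation~\ref{T-plane}). When the two axes are ``aligned'' (the centres $c_i,c_j$ lie on the same side of $x$ and the endpoints $b_i,b_j$ on the other), the near-collinearity of $x$ with both segments places $c_i$ inside $D_j$, contradicting Lemma~\ref{center-in-lemma}. The remaining intermediate configuration is handled in the same way once the ordering of the rays is fixed.

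The main obstacle I expect is the quantitative part of the aligned case. A crude distance bound is not enough: if $x$ lies close to an endpoint (say $x$ near $b_i$ while $c_j$ is near $x$), the centres $c_i$ and $c_j$ can be almost as far apart as $r_j=|a_jb_j|/2$, so proving $c_i\in D_j$ needs the exact $150^\circ$ threshold rather than a loose estimate. Concretely I would feed the two angle constraints together with $|a_ib_i|\le|a_jb_j|$ into the law of cosines for the triangle $c_ixc_j$ and exploit that the almond regions of Lemma~\ref{angle-in-trap} arise from the $60^\circ$ (equilateral-triangle) arc, which is exactly what pins $c_i$ strictly inside $D_j$. A secondary nuisance is the shared-endpoint case, where $D_i$ and $D_j$ meet at a common vertex of $\mathcal{T}$ and the two traps are pointed there; I would rule out overlap by noting that planarity forces the two segments, and hence the two thin lenses emanating from the shared vertex, to leave it in angularly separated directions.
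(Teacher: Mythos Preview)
Your overall strategy coincides with the paper's: argue by contradiction from a point $x$ in both traps, invoke the $150^\circ$ bound of Lemma~\ref{angle-in-trap} on each side, and drive toward a violation of Lemma~\ref{D-empty} or Lemma~\ref{center-in-lemma}. Where you diverge is in the case analysis, and that is exactly where your acknowledged gap sits.

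The paper does not split on the qualitative ``aligned / anti-aligned'' position of the two axes, nor does it assume one disk is larger. It splits on the single quantity $\angle c_ixc_j$, with threshold $60^\circ$, and both branches run on one elementary triangle fact: if an angle of a triangle is at most $60^\circ$, the opposite side is at most the larger of the two adjacent sides. When $\angle c_ixc_j\le 60^\circ$ this gives $|c_ic_j|\le\max\{|xc_i|,|xc_j|\}$; since $x$ lies strictly inside both disks, the right-hand side is strictly less than the corresponding radius, so one centre lands inside the other disk, contradicting Lemma~\ref{center-in-lemma}. That is the whole ``aligned'' case you flagged as the main obstacle: no law of cosines, no use of the exact $150^\circ$ constant, no size comparison between the disks. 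When $\angle c_ixc_j>60^\circ$, the two $\ge 150^\circ$ wedges at $x$ force the cross-angles $\angle a_ixc_j$ and $\angle a_jxc_i$ to be at most $60^\circ$; the same triangle fact, applied now to these angles, yields $|a_ic_j|\le\max\{|xa_i|,|xc_j|\}$ and $|a_jc_i|\le\max\{|xa_j|,|xc_i|\}$, and a short three-way sub-case check places one of the endpoints $a_i,a_j$ inside the other disk, contradicting Lemma~\ref{D-empty}. The shared-endpoint situation you singled out needs no separate treatment.

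So your plan is not wrong in spirit, but it is missing the organizing observation that makes the proof short: case on $\angle c_ixc_j$ at the $60^\circ$ threshold and replace the anticipated trigonometric computation by the ``$\le 60^\circ$ angle $\Rightarrow$ opposite side is not the longest'' inequality.
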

\begin{proof}
 We prove this lemma by contradiction. Suppose $x\in trap(D_i)\cap trap(D_j)$ and w.l.o.g. assume that $x\in trap(a_i)\cap trap(a_j)$ as shown in Figure~\ref{trap-intersection-fig}. Connect $x$ to $a_i$, $c_i$, $a_j$, and $c_j$ ($a_i$ may be identified with $a_j$). As shown in the proof of Lemma~\ref{angle-in-trap}, $\min\{\angle a_ixc_i, \angle a_jxc_j\} > 150^\circ$. Two configurations may arise: 
\begin{itemize}
 \item $\angle c_ixc_j \le 60^{\circ}$. In this case $|c_ic_j|\le \max\{|xc_i|,|xc_j|\}$. W.l.o.g. assume that $|xc_i|\le|xc_j|$ which implies that $|c_ic_j|\le |xc_j|$; see Figure ~\ref{trap-intersection-fig}(a). Clearly $|xc_j|<|c_ja_j|$, and hence $|c_ic_j|<|c_ja_j|$. Thus, $D_j$ contains $c_i$ which contradicts Lemma~\ref{center-in-lemma}.
  \item $\angle c_ixc_j > 60^{\circ}$. In this case $\angle a_ixc_j \le 60^\circ$ and $\angle a_jxc_i \le 60^\circ$, hence $|a_ic_j|\le \max\{|a_ix|,|c_jx|\}$ and $|a_jc_i|\le \max\{|a_jx|,|c_ix|\}$. Three configurations arise:

\begin{itemize}
 \item $|a_ix|<|c_jx|$, in this case $|a_ic_j|< |c_jx|<|c_ja_j|$ and hence $D_j$ contains $a_i$. See Figure~\ref{trap-intersection-fig}(b).
  \item $|a_jx|<|c_ix|$, in this case $|a_jc_i|< |c_ix|< |c_ia_i|$ and hence $D_i$ contains $a_j$. 
  \item $|a_ix|\ge|c_jx|$ and $|a_jx|\ge|c_ix|$, in this case w.l.o.g. assume that $|a_ix|\le|a_jx|$. Thus $|a_ic_j|\le |a_ix|\le |a_jx|<|a_jc_j|$ which implies that $D_j$ contains $a_i$. See Figure~\ref{trap-intersection-fig}(b).
\end{itemize}
All cases contradict Lemma~\ref{D-empty}. 
\end{itemize}
\end{proof}

Recall that each blocking angle is representing a trap. Thus, by Lemma~\ref{not-all-free-angles} and Lemma~\ref{intersecting-trap}, we have the following corollary:

\begin{corollary}
\label{one-blocked-angle}
Exactly one $\alpha_i$, where $1\le i\le 4$, is blocked.
\end{corollary}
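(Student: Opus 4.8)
The plan is to upgrade the one-sided bound of Lemma~\ref{not-all-free-angles}, which already guarantees that \emph{at least} one of $\alpha_1,\dots,\alpha_4$ is blocked, into an exact count by proving the matching statement that \emph{at most} one is blocked. The driving fact is the one established just before the corollary: whenever an angle $\alpha_i$ blocks another angle, the point $x$ is forced into $trap(D_i)$, so that $\mathcal{X}\subseteq trap(D_i)$. Since Lemma~\ref{intersecting-trap} shows that the traps of distinct disks are pairwise disjoint, two \emph{distinct} blockers $D_i$ and $D_{i'}$ would give $\mathcal{X}\subseteq trap(D_i)\cap trap(D_{i'})=\emptyset$, contradicting $x\in\mathcal{X}$. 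So the first step is simply to record that there is at most one \emph{blocker}, i.e.\ at most one angle that contains another.

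The second step is to reduce ``at most one blocked'' to this ``at most one blocker'' statement by a short case analysis built on Observation~\ref{inclusion-exclusion}. Suppose, for contradiction, that two angles $\alpha_j$ and $\alpha_{j'}$ are both blocked. If they are not contained in one and the same angle, then two distinct angles act as blockers and we are done by the previous paragraph. Otherwise both are contained in a single common angle $\alpha_i$, and by Observation~\ref{inclusion-exclusion} the angles $\alpha_j$ and $\alpha_{j'}$ are themselves either nested or disjoint. In the nested case, say $\alpha_{j'}\subseteq\alpha_j\subseteq\alpha_i$, the angle $\alpha_j$ contains $\alpha_{j'}$ yet is properly contained in $\alpha_i$, so $\alpha_i$ and $\alpha_j$ are again two distinct blockers, whose disjoint traps yield the same contradiction.

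The remaining case, which I expect to be the only delicate one, is when $\alpha_j$ and $\alpha_{j'}$ are mutually disjoint and both sit inside the single blocker $\alpha_i$; here no second blocker is produced, so trap-disjointness alone does not close the argument. The idea is to invoke the angle sizes: since every $\alpha_k\ge\pi/2$ (because $x$ lies in the disk $D_k$ whose diameter is $a_kb_k$), two disjoint subangles force $\alpha_i\ge\alpha_j+\alpha_{j'}\ge\pi$. As $\angle a_ixb_i\le\pi$ always, this pins $\alpha_i=\pi$, i.e.\ $x$ collinear with and between $a_i$ and $b_i$. The hard part is excluding exactly this straight-angle degeneracy: it corresponds to $x$ lying on the diameter segment $a_ib_i$, which contradicts $x$ lying properly inside $trap(D_i)$ (the trap was built from the strict ``left/right'' placement of $x$ off the line through $a_i$ and $b_i$), and under general position of $P$ it is non-generic anyway. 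Granting $\alpha_i<\pi$, a single blocker can contain at most one blocked subangle, so at most one $\alpha_i$ is blocked; combined with Lemma~\ref{not-all-free-angles}, exactly one is blocked. Thus the main obstacle is not the trap machinery but this final collinear configuration of two disjoint blocked subangles sharing one blocker, whose elimination rests on the strict size bound $\alpha_i<\pi$.
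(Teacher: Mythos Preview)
Your proposal follows the paper's route---combine Lemma~\ref{not-all-free-angles} with Lemma~\ref{intersecting-trap}---but you are more scrupulous than the paper's one-line justification in distinguishing ``at most one \emph{blocker}'' (which is what trap disjointness directly yields) from ``at most one \emph{blocked}'' (which is what the corollary asserts). The paper simply writes ``each blocking angle is representing a trap'' and invokes the two lemmas, glossing over exactly the distinction you raise. Your reduction via Observation~\ref{inclusion-exclusion} is sound in the nested sub-case: if $\alpha_{j'}\subseteq\alpha_j\subseteq\alpha_i$ then $\alpha_j$ is itself a blocker, so $x\in trap(D_i)\cap trap(D_j)=\emptyset$.

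The gap is in your handling of the remaining disjoint sub-case. Your angle-sum argument correctly forces $\alpha_i\ge\pi$, hence $\alpha_i=\pi$ and $x$ lies on the segment $a_ib_i$. But your two proposed ways of excluding this do not work as stated. First, $trap(D_i)$ is a closed region whose almond-shaped pieces are bounded by arcs with endpoints $c_i$ and $b_i$ (resp.\ $c_i$ and $a_i$), so the chord $c_ib_i$ lies \emph{in} $trap(D_i)$; the paper's assumption that $x$ is ``to the left of $a_ib_i$'' is a w.l.o.g.\ normalization of the picture, not a property of the trap region, so $x\in trap(D_i)$ does not contradict $x\in a_ib_i$. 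Second, the paper does not assume general position, so you cannot dismiss the collinear configuration as non-generic. To close this case you would need an additional geometric argument---for instance, note that $\alpha_i=\pi$ forces $\alpha_j=\alpha_{j'}=\pi/2$ exactly and that the two right angles share a boundary ray through $x$, and then derive a violation of Lemma~\ref{D-empty} or Lemma~\ref{center-in-lemma} from that collinearity---but as written the proof does not supply one.
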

Recall that $\alpha_j$ is blocked by $\alpha_i$, $a_ib_i$ is vertical line segment, $c_j$ is to the right of $a_ib_i$, and $x\in\widehat{c_ib_i}$. As a direct consequence of Corollary~\ref{one-blocked-angle}, $\alpha_i$, $\alpha_k$, and $\alpha_l$ are free angles, where $1\le\allowbreak i,j,\allowbreak k,\allowbreak l\le\allowbreak 4$ and $i\neq j\neq k\neq l$. In addition, $c_k$ and $c_l$ are to the left of $a_ib_i$. It is obvious that $$\mathcal{X}\subseteq trap(D_i)\cap D_k \cap D_l.$$

\begin{figure}[htb]
  \centering
  \includegraphics[width=.5\columnwidth]{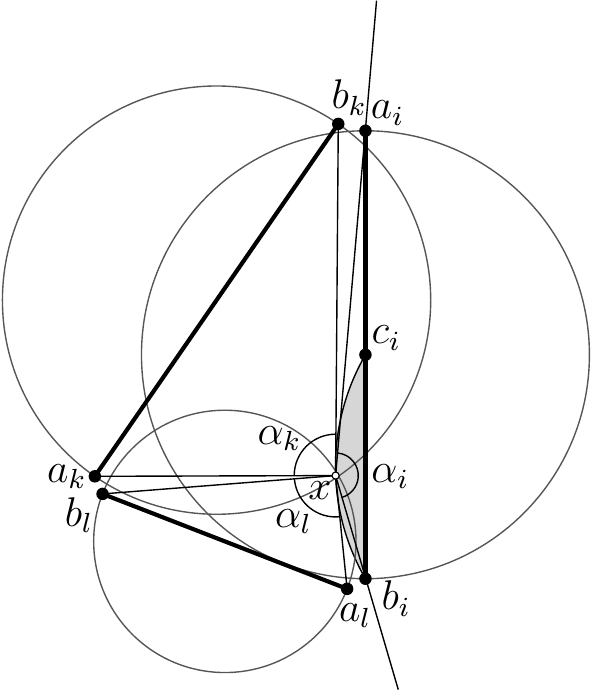}
 \caption{Illustration of Lemma~\ref{intersecting-trap2}.}
  \label{trap2-fig}
\end{figure}

\begin{lemma}
\label{intersecting-trap2}
For a blocking angle $\alpha_i$ and free angles $\alpha_k$ and $\alpha_l$, $trap(D_i)\allowbreak\cap D_k \allowbreak\cap D_l=\emptyset$. 
\end{lemma}
\begin{proof}
Since $\alpha_i$ is a blocking angle and $\alpha_k$, $\alpha_l$ are free angles, $c_k$ and $c_l$ are on the same side of $a_ib_i$.
 By contradiction, suppose that $x\in trap(D_i)\cap D_j \cap D_k$. See Figure~\ref{trap2-fig}. It is obvious that $\max\{|xa_i|, |xb_i|\}< |a_ib_i|$, $\max\{|xa_k|, |xb_k|\}< |a_kb_k|$, and $\max\{|xa_l|, |xb_l|\}<|a_lb_l|$. By Lemma~\ref{angle-in-trap}, $\alpha_i \ge 150^\circ$. In addition $\alpha_k, \alpha_l \allowbreak\ge\allowbreak 90^\circ$. Thus, $\max\{\angle a_ixb_k, \angle a_kxb_l, \angle a_lxb_i\}\allowbreak\le\allowbreak 30^\circ$. Hence, $|a_ib_k|\allowbreak<\allowbreak\max\{|xa_i|,|xb_k|\}$, $|a_kb_l|<\max\{|xa_k|,|xb_l|\}$, and $|a_lb_i|<\max\{|xa_l|,\allowbreak|xb_i|\}$. Therefore, $\max\{|a_ib_k|,|a_kb_l|, |a_lb_i|\}\allowbreak<\max\allowbreak\{|a_ib_i|,\allowbreak|a_kb_k|,|a_lb_l|\}$. In addition $\delta=(a_i,\allowbreak b_i,a_l,\allowbreak b_l,a_k,\allowbreak b_k,a_i)$ is a cycle and at least one of $(a_i,b_k)$, $(a_k,b_l)$ and $(a_l,b_i)$ does not belong to $\mathcal{T}$. This contradicts Lemma~\ref{not-mst-edge}. 
\end{proof}
Thus, $\mathcal{X}=\emptyset$; which complete the proof of Theorem~\ref{four-circle-theorem}.

\subsection{Lower Bounds}
\label{lower-bounds-section}

In this section we present some lower bounds on the size of a maximum matching in \kGG{2}{}, \kGG{1}{}, and \kGG{0}{}.

\begin{theorem}
 \label{matching-2GG}
For a set $P$ of an even number of points, \kGG{2}{} has a perfect matching.
\end{theorem}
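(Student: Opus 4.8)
Let $G$ denote \kGG{2}{}. The plan is to verify Tutte's condition (Theorem~\ref{Tutte}): I will show that $o(G-S)\le |S|$ for every $S\subseteq P$, where $o(G-S)$ is the number of odd components of $G-S$. Fix $S$ and let $C_1,\dots,C_m$ be the connected components of $G-S$; they partition $P\setminus S$. I would then apply the construction of Section~\ref{max-matching-section} to the point set $P\setminus S$ with the partition $\mathcal{P}=\{C_1,\dots,C_m\}$: build $G(\mathcal{P})$, take $MST(G(\mathcal{P}))$, and let $\mathcal{T}$ and $\mathcal{D}$ be the resulting $m-1$ diameter edges and disks. Lemma~\ref{D-empty} through Theorem~\ref{four-circle-theorem} concern only the partitioned point set and the geometry of these disks, so they hold verbatim with $P$ replaced by $P\setminus S$; in particular each $\CD{a}{b}\in\mathcal{D}$ contains no point of $(P\setminus S)\setminus\{a,b\}$, and no point of the plane lies in four disks of $\mathcal{D}$.

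First I would show that every tree disk swallows at least three points of $S$. A tree edge $(a,b)\in\mathcal{T}$ joins two distinct components $C_i$ and $C_j$, so $(a,b)$ is not an edge of $G$; hence $\CD{a}{b}$ contains at least three points of $P\setminus\{a,b\}$. By Lemma~\ref{D-empty} (applied to $P\setminus S$) none of these points lies in $P\setminus S$, so they all lie in $S$, giving $|\CD{a}{b}\cap S|\ge 3$.

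Next I would double count incidences between the $m-1$ disks of $\mathcal{D}$ and the points of $S$. By the previous step the number of incidences is at least $3(m-1)$, while Theorem~\ref{four-circle-theorem} bounds by three the number of disks containing any fixed point, so the number of incidences is at most $3|S|$. Hence $m-1\le |S|$, and therefore $o(G-S)\le m\le |S|+1$.

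Finally I would close the off-by-one gap with a parity argument, which is exactly where the evenness of $n$ enters. Every even component contributes an even number of points, so $n=|S|+\sum_i |C_i|\equiv |S|+o(G-S)\pmod 2$; since $n$ is even this forces $o(G-S)\equiv |S|\pmod 2$. An integer that is at most $|S|+1$ and congruent to $|S|$ modulo $2$ is at most $|S|$, so $o(G-S)\le |S|$. Thus Tutte's condition holds for all $S$ and $G$ has a perfect matching. The only real obstacle is the $+1$ slack in the counting bound; the parity observation removes it. The remaining care is in checking that the disk machinery of Section~\ref{max-matching-section}---especially Theorem~\ref{four-circle-theorem}---transfers to the point set $P\setminus S$, which it does because its proof never uses that the partition exhausts all of $P$.
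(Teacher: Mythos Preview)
Your proof is correct and follows essentially the same route as the paper: build $\mathcal{T}$ and $\mathcal{D}$ on the partition of $P\setminus S$ into components, use Lemma~\ref{D-empty} and the non-edge property to get three $S$-points per disk, use Theorem~\ref{four-circle-theorem} to cap incidences at $3|S|$, conclude $m\le |S|+1$, and then close the gap by parity. Your parity step is phrased via the congruence $o(G-S)\equiv |S|\pmod 2$ rather than the paper's case split on $|\mathcal{C}|\le s$ versus $|\mathcal{C}|=s+1$, but the content is identical; your explicit remark that the disk machinery transfers to $P\setminus S$ is also something the paper does only implicitly.
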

\begin{proof}
First we show that by removing a set $S$ of $s$ points from \kGG{2}{}, at most $s+1$ components are generated. Then we show that at least one of these components must be even. Using Theorem~\ref{Tutte}, we conclude that \kGG{2}{} has a perfect matching.

Let $S$ be a set of $s$ vertices removed from \kGG{2}{}, and let $\mathcal{C}=\{C_1, \dots, C_{m(s)}\}$ be the resulting $m(s)$ components, where $m$ is a function depending on $s$. Actually $\mathcal{C}=\text{\kGG{2}{}}-S$ and $\mathcal{P}=\{V(C_1),\dots, V(C_{m(s)})\}$ is a partition of the vertices in $P\setminus S$. 

{\bf\em  Claim 1.} $m(s)\le s+1$. Let $G(\mathcal{P})$ be the complete graph with vertex set $\mathcal{P}$ which is constructed as described above. Let $\mathcal{T}$ be the set of all edges in $P$ corresponding to the edges of $MST(G(\mathcal{P}))$ and let $\mathcal{D}$ be the set of disks corresponding to the edges of $\mathcal{T}$. It is obvious that $\mathcal{T}$ contains $m(s)-1$ edges and hence $|\mathcal{D}|=m(s)-1$. Let $F=\{(p,D):p\in S, D\in \mathcal{D}, p\in D\}$ be the set of all (point, disk) pairs where $p\in S$, $D\in \mathcal{D}$, and $p$ is inside $D$. By Theorem~\ref{four-circle-theorem} each point in $S$ can be inside at most three disks in $\mathcal{D}$. Thus, $|F|\le 3\cdot|S|$.
Now we show that each disk in $\mathcal{D}$ contains at least three points of $S$ in its interior.  
Consider any disk $D\in \mathcal{D}$ and let $e=(a,b)$ be the edge of $\mathcal{T}$ corresponding to $D$. By Lemma~\ref{D-empty}, $D$ does not contain any point of $P\setminus S$. Therefore, $D$ contains at least three points of $S$, because otherwise $(a,b)$ is an edge in \kGG{2}{} which contradicts the fact that $a$ and $b$ belong to different components in $\mathcal{C}$. Thus, each disk in $\mathcal{D}$ has at least three points of $S$. That is, $3\cdot|\mathcal{D}|\le|F|$. Therefore, $3(m(s)-1)\le |F|\le 3s$, and hence $m(s)\le s+1$.

{\bf \em Claim 2}: $o(\mathcal{C})\le s$. By Claim 1, $|\mathcal{C}|=m(s)\le s+1$. If $|\mathcal{C}|\le s$, then $o(\mathcal{C})\le s$. Assume that $|\mathcal{C}|=s+1$. Since $P=S\cup \{\bigcup^{s+1}_{i=1}{V(C_i)}\}$, the total number of vertices of $P$ is equal to $n=s+\sum_{i=1}^{s+1}{|V(C_i)|}$. Consider two cases where (i) $s$ is odd, (ii) $s$ is even. In both cases if all the components in $\mathcal{C}$ are odd, then $n$ is odd; contradicting our assumption that $P$ has an even number of vertices. Thus, $\mathcal{C}$ contains at least one even component, which implies that $o(\mathcal{C})\le s$.

Finally, by Claim 2 and Theorem~\ref{Tutte}, we conclude that \kGG{2}{} has a perfect matching.
\end{proof}

\begin{theorem}
\label{matching-1GG}
For every set $P$ of $n$ points, \kGG{1}{} has a matching of size at least $\frac{2(n-1)}{5}$.
\end{theorem}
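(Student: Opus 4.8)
The plan is to follow the same template as the proof of Theorem~\ref{matching-2GG}, but to apply the Tutte-Berge formula (Theorem~\ref{Berge}) instead of Tutte's theorem, since we now only need a lower bound on the matching size rather than a perfect matching. The key quantity to control is the deficiency $\mathrm{def}_{\text{\kGG{1}{}}}(S)=o(\text{\kGG{1}{}}-S)-|S|$ for an arbitrary $S\subseteq P$ with $|S|=s$, and the goal is to bound the total number of components $m(s)$ of $\text{\kGG{1}{}}-S$ from above, so that $o(\text{\kGG{1}{}}-S)\le m(s)$ is controlled and the formula yields a matching of size at least $\frac{1}{2}(n-\mathrm{def}(\text{\kGG{1}{}}))\ge\frac{2(n-1)}{5}$.

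First I would set up the partition $\mathcal{P}=\{V(C_1),\dots,V(C_{m(s)})\}$ of $P\setminus S$ into the components of $\text{\kGG{1}{}}-S$, build the complete graph $G(\mathcal{P})$, take $MST(G(\mathcal{P}))$, and form the edge set $\mathcal{T}$ and the disk set $\mathcal{D}$ exactly as before, so that $|\mathcal{D}|=m(s)-1$. The crucial change from the $k=2$ case is the counting argument. On one side, Theorem~\ref{four-circle-theorem} still applies, so each point of $S$ lies in at most three disks of $\mathcal{D}$, giving $|F|\le 3s$ for $F=\{(p,D):p\in S,D\in\mathcal{D},p\in D\}$. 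On the other side, for each disk $D\in\mathcal{D}$ with corresponding edge $(a,b)\in\mathcal{T}$, Lemma~\ref{D-empty} guarantees $D$ contains no point of $P\setminus S$; since $a$ and $b$ lie in different components, $(a,b)$ is not an edge of \kGG{1}{}, so $D$ must contain at least two points of $S$ (otherwise $D$ would hold at most one point and $(a,b)$ would be a \kGG{1}{} edge). Hence $2|\mathcal{D}|\le|F|$, giving $2(m(s)-1)\le 3s$, i.e. $m(s)\le\frac{3s+2}{2}=\frac{3s}{2}+1$.

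Next I would convert this component bound into a deficiency bound. Since $o(\text{\kGG{1}{}}-S)\le m(s)\le\frac{3s}{2}+1$, we get $\mathrm{def}_{\text{\kGG{1}{}}}(S)=o(\text{\kGG{1}{}}-S)-s\le\frac{3s}{2}+1-s=\frac{s}{2}+1$. This bound grows with $s$, so it is not immediately useful for large $s$; the point is that it must be combined with the trivial bound $\mathrm{def}_{\text{\kGG{1}{}}}(S)\le n-2s$ (the number of vertices outside $S$ that can form odd—hence nonempty—components is at most $n-s$, and in fact one argues more carefully that the deficiency is bounded by $n$ minus twice something). The cleanest route is to observe that every odd component has at least one vertex, so $o(\text{\kGG{1}{}}-S)\le n-s$, and then to balance the two estimates $o\le\frac{3s}{2}+1$ and $o\le n-s$ to find the worst case $s$. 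Setting the two component-count estimates against $s$ and maximizing $o(\text{\kGG{1}{}}-S)-s$ over $s$ should pin down $\mathrm{def}(\text{\kGG{1}{}})\le\frac{n+2}{5}$ or a comparable quantity, which plugged into Theorem~\ref{Berge} yields matching size at least $\frac{1}{2}\bigl(n-\frac{n+2}{5}\bigr)=\frac{2n-1}{5}$, from which the claimed $\frac{2(n-1)}{5}$ follows.

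I expect the main obstacle to be the optimization step that extracts the sharp constant $\frac{2}{5}$: the single inequality $m(s)\le\frac{3s}{2}+1$ is too weak on its own, and one must correctly combine it with a counting of vertices to bound the deficiency uniformly in $s$. The delicate point is identifying the right second inequality relating $o(\text{\kGG{1}{}}-S)$ and $s$ to $n$ and then verifying that the maximum of the deficiency over all admissible $s$ is exactly what produces $\frac{2(n-1)}{5}$; parity considerations (as in Claim~2 of Theorem~\ref{matching-2GG}) and the fact that odd components have size at least one will need to be handled carefully to avoid off-by-one losses in the constant.
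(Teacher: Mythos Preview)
Your approach is essentially identical to the paper's: the same partition/MST/disk setup, the same double count $2(m(s)-1)\le|F|\le 3s$, the second inequality $m(s)\le n-s$ from the fact that components are nonempty, and the optimization of $m(s)-s$ via Tutte--Berge. The only slip is arithmetic: solving $\frac{s}{2}+1=n-2s$ gives $s=\frac{2(n-1)}{5}$ and deficiency bound $\frac{n+4}{5}$ (not $\frac{n+2}{5}$), which yields exactly $|M^*|\ge\frac{1}{2}\bigl(n-\frac{n+4}{5}\bigr)=\frac{2(n-1)}{5}$ with no further parity adjustments needed.
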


\begin{proof}
Let $S$ be a set of $s$ vertices removed from \kGG{1}{}, and let $\mathcal{C}=\{C_1, \dots, C_{m(s)}\}$ be the resulting $m(s)$ components. Actually $\mathcal{C}=\text{\kGG{1}{}}-S$ and $\mathcal{P}=\{V(C_1),\dots, V(C_{m(s)})\}$ is a partition of the vertices in $P\setminus S$. Note that $o(\mathcal{C})\le m(s)$.
Let $M^*$ be a maximum matching in \kGG{1}{}. By Theorem~\ref{Berge}, 

\begin{align}
\label{align0}
|M^*|&= \frac{1}{2}(n-\text{def}(\text{\kGG{1}{}})),
\end{align}

where

\begin{align}
\label{align1}
\text{def}(\text{\kGG{1}{}})&= \max\limits_{S\subseteq P}(o(\mathcal{C})-|S|)\nonumber\\
& \le \max\limits_{S\subseteq P}(|\mathcal{C}|-|S|)\nonumber\\
& = \max\limits_{0\le s\le n}(m(s)-s).
\end{align}
Define $G(\mathcal{P})$, $\mathcal{T}$, $\mathcal{D}$, and $F$ as in the proof of Theorem~\ref{matching-2GG}. By Theorem~\ref{four-circle-theorem}, $|F|\le 3\cdot|S|$.
By the same reasoning as in the proof of Theorem~\ref{matching-2GG}, each disk in $\mathcal{D}$ has at least two points of $S$ in its interior. Thus, $2\cdot|\mathcal{D}|\le|F|$. Therefore, $2(m(s)-1)\le |F| \le 3s$, and hence

\begin{equation}
\label{ineq1}
 m(s)\le\frac{3s}{2}+1.
\end{equation} 

In addition, $s+m(s)=|S|+|\mathcal{C}|\le |P|=n$, and hence

\begin{equation}
\label{ineq2}               
m(s)\le n-s.
\end{equation}

By Inequalities~(\ref{ineq1}) and ~(\ref{ineq2}), 

\begin{equation}
\label{ineq3}               
m(s)\le \min\{\frac{3s}{2}+1, n-s\}.
\end{equation}

Thus, by (\ref{align1}) and (\ref{ineq3})

\begin{align}
\label{align2}
\text{def}(\text{\kGG{1}{}})&\le \max\limits_{0\le s\le n}(m(s)-s)\nonumber\\
&\le \max\limits_{0\le s\le n}\{\min\{\frac{3s}{2}+1, n-s\}-s\}\nonumber\\
&= \max\limits_{0\le s\le n}\{\min\{\frac{s}{2}+1, n-2s\}\}\nonumber\\
&= \frac{n+4}{5},
\end{align}

where the last equation is achieved by setting $\frac{s}{2}+1$ equal to $n-2s$, which implies $s=\frac{2(n-1)}{5}$. Finally by substituting (\ref{align2}) in Equation (\ref{align0}) we have
$$
|M^*|\ge \frac{2(n-1)}{5}.
$$
\end{proof}

By similar reasoning as in the proof of Theorem~\ref{matching-1GG} we have the following Theorem.

\begin{figure}[htb]
  \centering
  \includegraphics[width=.6\columnwidth]{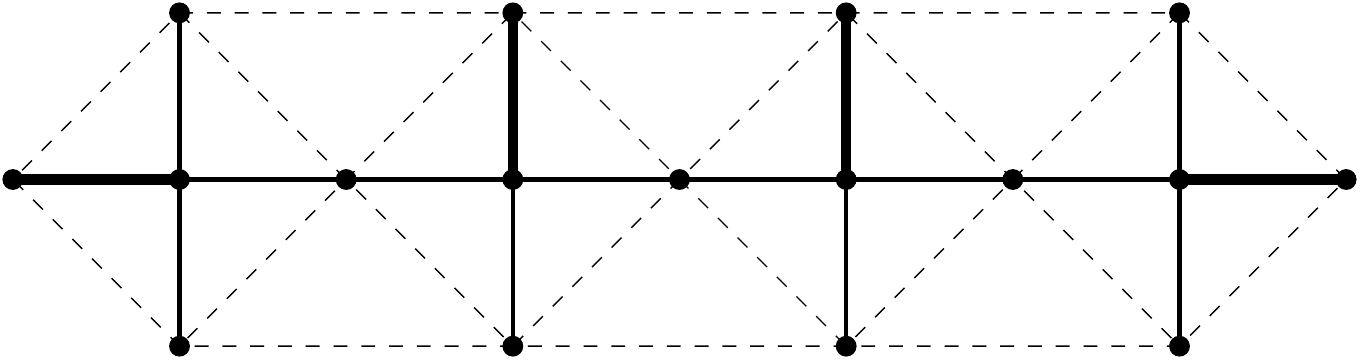}
 \caption{A \kGG{0}{} of  $n = 17$ points with a maximum matching of size $\frac{n-1}{4}=4$ (bold edges). The dashed edges do not belong to the graph because any of their corresponding closed disks has a point on its boundary.}
  \label{tight-0GG}
\end{figure}

\begin{theorem}
\label{matching-0GG}
For every set $P$ of $n$ points, \kGG{0}{} has a matching of size at least $\frac{n-1}{4}$.
\end{theorem}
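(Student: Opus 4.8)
The plan is to follow the proof of Theorem~\ref{matching-1GG} essentially verbatim, changing only the one counting constant that depends on the order $k$. I would again apply the Tutte--Berge formula (Theorem~\ref{Berge}): writing $M^*$ for a maximum matching in \kGG{0}{}, we have $|M^*| = \frac{1}{2}(n-\mathrm{def}(\text{\kGG{0}{}}))$, so the whole task reduces to bounding the deficiency. As before, for a removed vertex set $S$ of size $s$ I let $\mathcal{C}$ be the components of $\text{\kGG{0}{}}-S$, set $m(s)=|\mathcal{C}|$, and use $\mathrm{def}(\text{\kGG{0}{}})\le\max_{0\le s\le n}(m(s)-s)$.

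Next I would rebuild the same auxiliary structures: the complete graph $G(\mathcal{P})$ on the partition $\mathcal{P}$ of the component vertex sets, the straight-line edge set $\mathcal{T}$ realizing $MST(G(\mathcal{P}))$ (so $|\mathcal{T}|=m(s)-1$), the diametral disks $\mathcal{D}$ (so $|\mathcal{D}|=m(s)-1$), and the incidence set $F=\{(p,D):p\in S,\ D\in\mathcal{D},\ p\in D\}$. Theorem~\ref{four-circle-theorem} gives the upper bound $|F|\le 3|S|=3s$ exactly as in the $1$-GG case, since no point of the plane lies in four disks of $\mathcal{D}$.

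The one place where the argument specializes to $k=0$ is the lower bound on $|F|$. Fix $D\in\mathcal{D}$ with diametral edge $(a,b)\in\mathcal{T}$; by Lemma~\ref{D-empty}, $D$ is empty of points of $(P\setminus S)\setminus\{a,b\}$. Since $a$ and $b$ lie in different components of $\text{\kGG{0}{}}-S$, the pair $(a,b)$ is not an edge of \kGG{0}{}, so $D$ must contain at least one point of $P\setminus\{a,b\}$; by Lemma~\ref{D-empty} that point belongs to $S$. Hence every disk accounts for at least one pair of $F$, i.e. $m(s)-1=|\mathcal{D}|\le|F|\le 3s$, which gives $m(s)\le 3s+1$. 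Together with the trivial inequality $s+m(s)\le n$ this yields $m(s)\le\min\{3s+1,\,n-s\}$.

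It remains to optimize, just as in the derivation of (\ref{align2}):
\begin{equation*}
\mathrm{def}(\text{\kGG{0}{}})\le\max_{0\le s\le n}\bigl\{\min\{3s+1,\,n-s\}-s\bigr\}=\max_{0\le s\le n}\bigl\{\min\{2s+1,\,n-2s\}\bigr\}=\frac{n+1}{2},
\end{equation*}
the maximum being attained when $2s+1=n-2s$, that is $s=\frac{n-1}{4}$. Substituting into the Tutte--Berge identity gives $|M^*|\ge\frac{1}{2}\bigl(n-\frac{n+1}{2}\bigr)=\frac{n-1}{4}$. I do not expect a genuine obstacle here: all the geometric difficulty is already absorbed into Theorem~\ref{four-circle-theorem} and Lemma~\ref{D-empty}, so the only points requiring attention are using the correct per-disk count ($k+1=1$ when $k=0$) and redoing the elementary $\min$--$\max$ computation with the new coefficient. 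The tightness claimed for this bound (Figure~\ref{tight-0GG}) concerns a matching upper bound and lies outside what this direction requires.
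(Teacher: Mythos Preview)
Your proposal is correct and is exactly the argument the paper has in mind: the paper's own proof of Theorem~\ref{matching-0GG} is the single sentence ``by similar reasoning as in the proof of Theorem~\ref{matching-1GG},'' and you have carried out precisely that reasoning, replacing the per-disk lower bound $2$ by $1$ and redoing the $\min$--$\max$ optimization to obtain $\mathrm{def}(\text{\kGG{0}{}})\le\frac{n+1}{2}$ and hence $|M^*|\ge\frac{n-1}{4}$.
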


The bound in Theorem~\ref{matching-0GG} is tight, as can be seen from the graph in Figure~\ref{tight-0GG}, for which the maximum matching has size $\frac{n-1}{4}$. Actually this is a Gabriel graph of maximum degree four which is a tree. The dashed edges do not belong to \kGG{0}{} because any closed disk which has one of these edges as diameter has a point on its boundary. Observe that each edge in any matching is adjacent to one of the vertices of degree four.

\begin{paragraph}{Note:}For a point set $P$, let $\nu_k(P)$ and $\alpha_k(P)$ respectively denote the size of a maximum matching and a maximum independent set in \kGG{k}{}. For every edge in the maximum matching, at most one of its endpoints can be in the maximum independent set. Thus,$$\alpha_k(P)\le |P| - \nu_k(P).$$
By combining this formula with the results of Theorems ~\ref{matching-0GG}, \ref{matching-1GG}, \ref{matching-2GG}, respectively, we have $\alpha_0(P)\le \frac{3n+1}{4}$, $\alpha_1(P)\le \frac{3n+2}{5}$, and $\alpha_2(P)\le \lceil\frac{n}{2}\rceil$. The \kGG{0}{} graph in Figure~\ref{tight-0GG} has an independent set of size $\frac{3n+1}{4}=13$, which shows that this bound is tight for \kGG{0}{}. On the other hand, \kGG{0}{} is planar and every planar graph is 4-colorable; which implies that $\alpha_0(P)\ge \lceil\frac{n}{4}\rceil$. There are some examples of \kGG{0}{} in \cite{Matula1980} such that $\alpha_0(P)= \lceil\frac{n}{4}\rceil$, which means that this bound is tight as well.
\end{paragraph}

\section{Blocking Higher-Order Gabriel Graphs}
\label{blocking-section}
In this section we consider the problem of blocking higher-order Gabriel graphs. Recall that a point set $K$ blocks \kGG{k}{(P)} if in \kGG{k}{(P\cup K)} there is no edge connecting two points in $P$. 

\begin{theorem}
\label{blocking-thr1}
For every set $P$ of $n$ points, at least $\lceil\frac{n-1}{3}\rceil$ points are necessary to block \kGG{0}{(P)}.
\end{theorem}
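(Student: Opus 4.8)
The plan is to reuse the disk machinery already developed in Section~\ref{max-matching-section}, and in particular Theorem~\ref{four-circle-theorem}, instead of constructing a fresh geometric argument. Let $K$ be an arbitrary point set that blocks \kGG{0}{(P)}; the goal is to prove that $|K|\ge\lceil\frac{n-1}{3}\rceil$, which establishes the claimed lower bound for every $P$.

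First I would instantiate the partition $\mathcal{P}$ from that section as the partition of $P$ into singletons, $\mathcal{P}=\{\{p\}:p\in P\}$. Then $d(\{p\},\{q\})=|pq|$, so $G(\mathcal{P})$ is just the complete Euclidean graph on $P$, $MST(G(\mathcal{P}))$ is a minimum spanning tree of $P$, and the associated straight-line edge set $\mathcal{T}$ has exactly $n-1$ edges; consequently $|\mathcal{D}|=n-1$. By Lemma~\ref{D-empty}, each disk $\CD{a}{b}\in\mathcal{D}$ is empty of points of $P\setminus\{a,b\}$, so every such pair $(a,b)$ is genuinely an edge of \kGG{0}{(P)}.

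The crux is a covering/incidence count. Because $K$ blocks the graph, each Gabriel edge $(a,b)\in\mathcal{T}$ must be destroyed in \kGG{0}{(P\cup K)}; for an order-$0$ Gabriel edge this means the closed disk $\CD{a}{b}$ must contain at least one point of $K$ (a point of $K$ lying on the boundary circle $\CIRC{a}{b}$ already forces a right angle and kills the edge, so the closed disk is the right object). Hence each of the $n-1$ disks of $\mathcal{D}$ contains at least one point of $K$, so if $F=\{(k,D):k\in K,\ D\in\mathcal{D},\ k\in D\}$ then $|F|\ge n-1$. On the other hand, Theorem~\ref{four-circle-theorem} states that no point of the plane lies in four disks of $\mathcal{D}$, so each point of $K$ appears in at most three pairs, giving $|F|\le 3|K|$. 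Combining these, $n-1\le 3|K|$, and since $|K|$ is an integer we get $|K|\ge\lceil\frac{n-1}{3}\rceil$.

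I expect the only delicate points to be bookkeeping rather than genuine difficulty. One should check that all lemmas of Section~\ref{max-matching-section} specialize cleanly to the singleton partition, so that $\mathcal{T}$ really is (a subtree of) a minimum spanning tree of $P$ by Observation~\ref{T-plane} and Theorem~\ref{four-circle-theorem} applies verbatim. The other is the closed-versus-open disk convention: a blocking point on the bounding circle still counts as destroying the edge, which is exactly the convention under which $\mathcal{D}$ consists of closed disks $\CD{a}{b}$ and under which Theorem~\ref{four-circle-theorem} is stated, so the two bounds on $|F|$ are consistent. No new geometric estimate is required beyond what Theorem~\ref{four-circle-theorem} already supplies.
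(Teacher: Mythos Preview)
Your proposal is correct and is essentially the paper's own proof: the paper also takes $\mathcal{P}=P$ (i.e., the singleton partition), observes that $\mathcal{T}$ is a minimum spanning tree of $P$ so $|\mathcal{D}|=n-1$ and $\mathcal{T}\subseteq\text{\kGG{0}{(P)}}$ by Lemma~\ref{D-empty}, and then double-counts point--disk incidences using Theorem~\ref{four-circle-theorem} to get $|K|\ge\lceil\frac{n-1}{3}\rceil$. The only cosmetic difference is that you make the incidence set $F$ explicit, as in the proof of Theorem~\ref{matching-2GG}.
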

\begin{proof}
Let $K$ be a set of $m$ points which blocks \kGG{0}{(P)}. Let $G(\mathcal{P})$ be the complete graph with vertex set $\mathcal{P}=P$. Let $\mathcal{T}$ be a minimum spanning tree of $G(\mathcal{P})$ and let $\mathcal{D}$ be the set of closed disks corresponding to the edges of $\mathcal{T}$. It is obvious that $|\mathcal{D}|=n-1$. By Lemma~\ref{D-empty} each disk $\CD{a}{b}\in\mathcal{D}$ does not contain any point of $P\setminus\{a,b\}$, thus,  $\mathcal{T}\subseteq\text{\kGG{0}{(P)}}$. To block each edge of $\mathcal{T}$, corresponding to a disk in $\mathcal{D}$, at least one point is necessary. By Theorem~\ref{four-circle-theorem} each point in $K$ can lie in at most three disks of $\mathcal{D}$. Therefore, $m\ge\lceil\frac{n-1}{3}\rceil$, which implies that at least $\lceil\frac{n-1}{3}\rceil$ points are necessary to block all the edges of $\mathcal{T}$ and hence \kGG{0}{(P)}.
\end{proof}
\begin{figure}[htb]
  \centering
\setlength{\tabcolsep}{0in}
  $\begin{tabular}{cc}
 \multicolumn{1}{m{.75\columnwidth}}{\centering\includegraphics[width=.65\columnwidth]{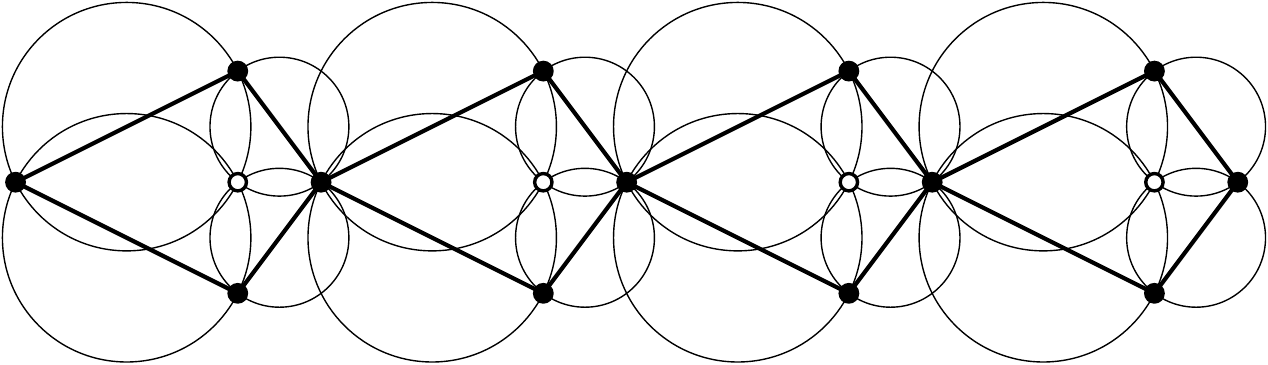}}
&\multicolumn{1}{m{.25\columnwidth}}{\centering\includegraphics[width=.23\columnwidth]{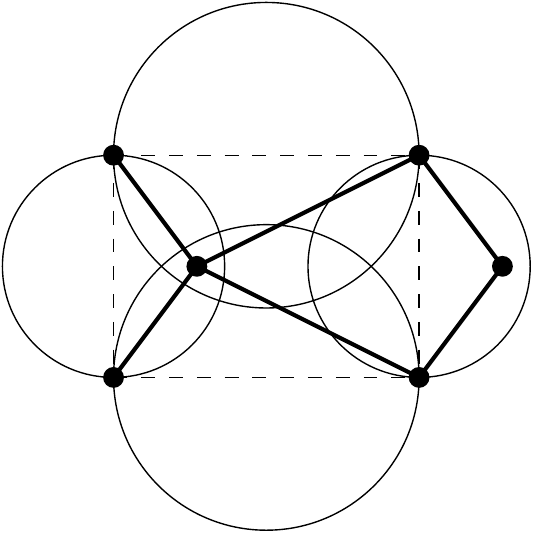}}\\
(a)&(b)
\end{tabular}$
  \caption{(a) \kGG{0}{} graph of $n=13$ points (in bold edges) which is blocked by $\lceil\frac{n-1}{3}\rceil=4$ white points, (b) dashed edges do not belomg to \kGG{0}{}.}
\label{blocking-fig}
\end{figure}

Figure~\ref{blocking-fig}(a) shows a \kGG{0}{} with $n=13$ (black) points which is blocked by $\lceil\frac{n-1}{3}\rceil=4$ (white) points. Note that all the disks, corresponding to the edges of every cycle, intersect at the same point in the plane (where we have placed the white points). As shown in Figure~\ref{blocking-fig}(b), the dashed edges do not belong to \kGG{0}{}. Thus, the lower bound provided by Theorem~\ref{blocking-thr1} is tight. It is easy to generalize the result of Theorem~\ref{blocking-thr1} to higher-order Gabriel graphs. Since in a \kGG{k}{} we need at least $k+1$ points to block an edge of $\mathcal{T}$ and each point can be inside at most three disks in $\mathcal{D}$, we have the following corollary:

\begin{corollary}
For every set $P$ of $n$ points, at least $\lceil\frac{(k+1)(n-1)}{3}\rceil$ points are necessary to block \kGG{k}{(P)}.
\end{corollary}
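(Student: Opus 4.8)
The plan is to mimic the proof of Theorem~\ref{blocking-thr1} and replace its "one point per disk" argument by a "$k+1$ points per disk" argument, then close with the same double-counting. Concretely, let $K$ be any set of $m$ points that blocks \kGG{k}{(P)}. As in the $k=0$ case, I would take $\mathcal{P}=P$ (each point as its own singleton class), let $\mathcal{T}$ be a minimum spanning tree of the complete graph $G(\mathcal{P})$, and let $\mathcal{D}$ be the family of closed disks having the $n-1$ edges of $\mathcal{T}$ as diameters, so that $|\mathcal{D}|=n-1$. By Lemma~\ref{D-empty}, every disk $\CD{a}{b}\in\mathcal{D}$ is free of points of $P\setminus\{a,b\}$.

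The key local step is to count how many points of $K$ must lie inside a single disk. Fix an edge $(a,b)\in\mathcal{T}$ with disk $\CD{a}{b}\in\mathcal{D}$. Since $K$ blocks \kGG{k}{(P)}, the pair $(a,b)$ is not an edge of \kGG{k}{(P\cup K)}, which means $\CD{a}{b}$ contains at least $k+1$ points of $(P\cup K)\setminus\{a,b\}$. But by Lemma~\ref{D-empty} none of these points belongs to $P\setminus\{a,b\}$, so all of them lie in $K$. Hence each of the $n-1$ disks in $\mathcal{D}$ contains at least $k+1$ points of $K$.

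Finally I would double-count the incidences between points of $K$ and disks of $\mathcal{D}$. Counting by disks yields at least $(k+1)|\mathcal{D}|=(k+1)(n-1)$ incidences. Counting by points, Theorem~\ref{four-circle-theorem} guarantees that no point lies in four disks of $\mathcal{D}$, so each of the $m$ points of $K$ contributes at most $3$ incidences, for a total of at most $3m$. Combining the two counts gives $(k+1)(n-1)\le 3m$, and since $m$ is an integer this yields $m\ge\lceil(k+1)(n-1)/3\rceil$, as claimed.

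I expect no genuine obstacle: the corollary is a direct quantitative strengthening of Theorem~\ref{blocking-thr1}, and Theorem~\ref{four-circle-theorem} is precisely the ingredient that caps each point's contribution at $3$ independently of $k$. The only point requiring care is the clean observation that, because the MST disks are $P$-empty, the $k+1$ points forcing a non-edge must all come from $K$; once that is in place the bound follows immediately.
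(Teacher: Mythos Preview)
Your proposal is correct and matches the paper's own argument essentially verbatim: the paper simply notes that each edge of $\mathcal{T}$ needs at least $k+1$ blocking points (since by Lemma~\ref{D-empty} the MST disks are empty of points of $P$) and that by Theorem~\ref{four-circle-theorem} each blocking point lies in at most three disks of~$\mathcal{D}$, then concludes by the same double count you wrote out.
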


In \cite{Aronov2013} the authors showed that every Gabriel graph can be blocked by a set $K$ of $n-1$ points by putting a point slightly to the right of each point of $P$, except for the rightmost one. Every disk with diameter determined by two points of $P$ will contain a point of $K$. Using a similar argument one can block a \kGG{k}{} by putting $k+1$ points slightly to the right of each point of $P$, except for the rightmost one. Thus,

\begin{corollary}
 For every set $P$ of $n$ points, there exists a set of $(k+1)(n-1)$ points that blocks \kGG{k}{(P)}.
\end{corollary}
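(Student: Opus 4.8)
The plan is to make precise the informal recipe sketched just above the statement: perturb the single-point construction of \cite{Aronov2013} by replacing each blocking point with $k+1$ nearby copies. First I would apply a rotation of the plane so that no two points of $P$ share an $x$-coordinate. Since \kGG{k}{} is defined purely in terms of Euclidean distances it is invariant under rotation, and only finitely many rotation angles are forbidden (those making some segment $\overline{p_ip_j}$ vertical), so an admissible rotation exists. After rotating, label the points $p_1,\dots,p_n$ in order of increasing $x$-coordinate; then ``slightly to the right'' of a point means a small displacement in the positive $x$-direction $\hat x$.

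The geometric heart of the argument is the following uniform statement. Fix an index $i<n$ and any $j>i$. Since $p_j$ lies strictly to the right of $p_i$, the vector $p_j-p_i$ has a strictly positive $x$-component, and this vector points from the boundary point $p_i$ toward the center (the midpoint of $p_i$ and $p_j$) of the disk $\CD{p_i}{p_j}$. A direct computation of the squared distance from $p_i+t\hat x$ to that center shows it equals $R^2-t\big((p_j)_x-(p_i)_x\big)+t^2$, where $R$ is the radius; hence $p_i+t\hat x$ lies in the open disk for every $t$ with $0<t<(p_j)_x-(p_i)_x$. Because there are only finitely many indices $j>i$, I may set $\varepsilon_i=\min_{j>i}\big((p_j)_x-(p_i)_x\big)>0$, so that every point $p_i+t\hat x$ with $0<t<\varepsilon_i$ lies simultaneously in the interior of $\CD{p_i}{p_j}$ for all $j>i$.

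With this in hand I define the blocking set: for each $i<n$ choose $k+1$ distinct values $0<t_1<\dots<t_{k+1}<\varepsilon_i$ (shrinking $\varepsilon_i$ if necessary so these points avoid the finitely many points of $P$), and add the $k+1$ points $p_i+t_\ell\hat x$ to $K$. This yields $|K|=(k+1)(n-1)$, as required. Now take any pair $p_i,p_j$ of points of $P$ with $i<j$. By construction the $k+1$ points placed to the right of $p_i$ all lie in the interior of $\CD{p_i}{p_j}$, so $\CD{p_i}{p_j}$ contains at least $k+1$ points of $(P\cup K)\setminus\{p_i,p_j\}$; consequently $(p_i,p_j)$ is not an edge of \kGG{k}{(P\cup K)}. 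As this holds for every pair in $P$, the set $K$ of $(k+1)(n-1)$ points blocks \kGG{k}{(P)}.

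I expect the only genuine obstacle to be the uniformity asserted in the second paragraph, namely guaranteeing that one sufficiently small rightward displacement lands inside \emph{all} of the disks $\CD{p_i}{p_j}$ with $j>i$ at once, and that $k+1$ distinct such points fit into the common interior region. This is exactly what forces the appeal to the finiteness of $P$ (to take a positive minimum $\varepsilon_i$) and to the explicit sign of the distance computation; everything else is bookkeeping and the counting $|K|=(k+1)(n-1)$.
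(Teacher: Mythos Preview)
Your proposal is correct and follows exactly the approach the paper sketches (the paper itself gives no formal proof, only the one-line remark that one places $k+1$ points slightly to the right of each point of $P$ except the rightmost). Your contribution is to make this rigorous by rotating to obtain distinct $x$-coordinates, computing explicitly that a small rightward displacement lands in the interior of every disk $\CD{p_i}{p_j}$ with $j>i$, and taking a finite minimum to get a uniform $\varepsilon_i$; this is precisely the intended argument, carried out in detail.
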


Note that this upper bound is tight, because if the points of $P$ are on a line, the disks representing the minimum spanning tree are disjoint and each disk needs $k+1$ points to block the corresponding edge.

\section{Conclusion}
\label{conclusion}
In this paper, we considered the bottleneck and perfect matching admissibility of higher-order Gabriel graphs. We proved that
\begin{itemize}
  \item \kGG{10}{} contains a Euclidean bottleneck matching of $P$ and \kGG{8}{} may not have any.
  \item \kGG{0}{} has a matching of size at least $\frac{n-1}{4}$ and this bound is tight.
  \item \kGG{1}{} has a matching of size at least $\frac{2(n-1)}{5}$.
  \item \kGG{2}{} has a perfect matching.
    \item $\lceil\frac{n-1}{3}\rceil$ points are necessary to block \kGG{0}{} and this bound is tight.
  \item $\lceil\frac{(k+1)(n-1)}{3}\rceil$ points are necessary and $(k+1)(n-1)$ points are sufficient to block \kGG{k}{}.
\end{itemize}
We leave a number of open problems:
\begin{itemize}
  \item Does \kGG{9}{} contain a Euclidean bottleneck matching of $P$?
  \item What is a tight lower bound on the size of a maximum matching in \kGG{1}{}?
\end{itemize}

\bibliographystyle{abbrv}
\bibliography{GG-matching.bib}

\end{document}